% ----------------------------------------------------------------
% AMS-LaTeX Paper - modified by S.-J. Oh
% **** -----------------------------------------------------------
\documentclass[12pt]{amsart}
\usepackage{amssymb}
\usepackage{graphicx}

\usepackage{xcolor} % A package to add color.

\usepackage{fullpage} % Sets all margins to 1 in.

\setlength{\marginparwidth}{.6in}

% COLORS ------------------------------------------------------------
\definecolor{green}{rgb}{0,0.8,0} % Redefines the color green.
%%%% Annotations %%%%
 % Defines the command "\texthl{<text>}" to be the text of its argument highlighted in yellow.
 % Defines the command "\displayhl{<displayed math>}" to be the displayed mathematics of its argument highlighted in yellow.
 % Defines the command "\scripthl{<superscript or subscript>}" to be the superscript or subscript of its argument highlighted in yellow.
 % Defines the command "\comment{}" to be its argument written in red font.

% THEOREMS -------------------------------------------------------
\newtheorem{theorem}{Theorem}[section]
\newtheorem{corollary}[theorem]{Corollary}
\newtheorem{lemma}[theorem]{Lemma}
\newtheorem{proposition}[theorem]{Proposition}
\theoremstyle{definition}
\newtheorem{definition}[theorem]{Definition}

\theoremstyle{remark}
\newtheorem{remark}[theorem]{Remark}
\numberwithin{equation}{section}
%\numberwithin{equation}{subsection}
% MATH -----------------------------------------------------------

\newcommand{\abs}[1]{\vert#1\vert}

\newcommand{\set}[1]{\{#1\}}

\newcommand{\ud}{\mathrm{d}}
\newcommand{\rd}{\partial}

%Greek Characters
\newcommand{\alp}{\alpha}
\newcommand{\bt}{\beta}
\newcommand{\gmm}{\gamma}
\newcommand{\Gmm}{\Gamma}
\newcommand{\dlt}{\delta}

\newcommand{\eps}{\epsilon}
\newcommand{\ep}{\epsilon}

\newcommand{\Omg}{\Omega}

%Bold Characters

%Blackboard Bold Characters

\newcommand{\bbR}{\mathbb R}

%MathCal Characters

\newcommand{\calE}{\mathcal E}

\newcommand{\calH}{\mathcal H}
\newcommand{\calI}{\mathcal I}

\newcommand{\calQ}{\mathcal Q}

\newcommand{\calS}{\mathcal S}

\newcommand{\PD}{\calQ}

\newcommand{\uC}{\underline{C}}

\newcommand{\pfstep}[1]{\vspace{.5em} \noindent {\it #1.}}

\newcommand{\rst}{\upharpoonright}
\newcommand{\f}{\frac}
\newcommand{\ls}{\lesssim}
\newcommand{\de}{\delta}

\newcommand{\dc}{\gmm}

% ----------------------------------------------------------------

\vfuzz2pt % Don't report over-full v-boxes if over-edge is small
\hfuzz2pt % Don't report over-full h-boxes if over-edge is small

%-----------------------------------------------------------------

\begin{document}

\title[]{Dynamical black holes with prescribed masses in spherical symmetry}%: Title of the article
\author{Jonathan Luk}
\address{Department of Mathematics, Stanford University, Palo Alto, CA, USA}
\email{jluk@stanford.edu}

\author{Sung-Jin Oh}
\address{Korea Institute for Advanced Study, Seoul, Korea}
\email{sjoh@kias.re.kr}

\author{Shiwu Yang}
\address{Peking University, Beijing, China}
\email{shiwuyang@math.pku.edu.cn}

%\thanks{}%
%\subjclass{}%
%\keywords{}%

%\date{\today}%
%\dedicatory{}%
%\commby{}%
% ----------------------------------------------------------------
\begin{abstract}
We review our recent work on a construction of spherically symmetric global solutions to the Einstein--scalar field system with large bounded variation norms and large Bondi masses. We show that similar ideas, together with Christodoulou's short pulse method, allow us to prove the following result: Given $M_i \geq M_f>0$ and $\epsilon>0$, there exists a spherically symmetric (black hole) solution to the Einstein--scalar field system such that up to an error of size $\epsilon$, the initial Bondi mass is $M_i$ and the final Bondi mass is $M_f$. Moreover, if one assumes a continuity property of the final Bondi mass (which in principle follows from known techniques in the literature), then for $M_i>M_f>0$, the above result holds without an $\epsilon$-error.
\end{abstract}
\maketitle
% ----------------------------------------------------------------

\section{Introduction}

We study the Einstein--scalar field system for a Lorentzian manifold $(\mathcal M, g)$ and a real-valued function $\phi:\mathcal M\to \mathbb R$:
\begin{equation}\label{ESS}
\begin{cases}
Ric_{\mu\nu}-\f 12 R g_{\mu\nu}=2T_{\mu \nu}\\
T_{\mu\nu}=\rd_\mu\phi\rd_\nu\phi-\f 12 g_{\mu\nu}(g^{-1})^{\alp\bt}\rd_\alp\phi\rd_\bt\phi\\
\Box_g \phi=0
\end{cases}
\end{equation}
in $(3+1)$ dimensions with \emph{spherically symmetric data}. Here, $\Box_g$ is the Laplace--Beltrami operator associated to the metric $g$.

In a recent paper \cite{LOY}, we introduced a method of constructing global solutions (in the sense of causal geodesic completeness) with arbitrarily large bounded variation norms and Bondi masses. This is in contrast to the global solutions constructed by Christodoulou \cite{Christodoulou:1986, Christodoulou:1993bt} which have small and decaying initial data. The key innovation is the introduction of a pair of \emph{dimensional} norms which are consistent with \emph{non-decaying} initial data. On the one hand, the smallness of these norms allows us to prove a global existence result and on the other hand, since the data are not required to decay, one can construct data which are \emph{large} in any integrated norms. We refer the readers to Theorem~\ref{thm:main.finite} below for a precise statement.

On a different note, it is well-known that \eqref{ESS} admits spherically symmetric solutions which have a black hole region and are future causally geodesically incomplete \cite{Christodoulou:1991}. In this note, we are particularly interested in constructing black hole spacetimes with special properties using ideas in \cite{LOY}.

More precisely, we show that the estimates proved in \cite{LOY}, together with the short pulse method of Christodoulou \cite{Chr}, allows us to construct black hole spacetimes whose initial and final Bondi masses can be prescribed up to an arbitrarily small error (cf. Theorem~\ref{thm:main}). The key point is that the estimates in \cite{LOY} allow us to construct initial data such that despite black hole formation, the radiation of mass along future null infinity can be well-controlled. %For comparison, let us note that in the dispersive setting, one can easily arrange that an arbitrarily prescribed amount of mass is radiated away along null infinity. This is because $m$ has the scaling of length and therefore can be changed by rescaling. However, rescaling arguments alone are difficult to combine with Christodoulou's short pulse method to ensure black hole formation since the estimates one obtain would be too weak.

To further discuss our results, we recall the reduction of \eqref{ESS} in spherical symmetry. It is well-known that in spherical symmetry we can introduce null coordinates $(u,v)$ such that the metric $g$ takes the form
$$g=-\Omg^2 \ud u \cdot \ud v+r^2 \ud \sigma_{\mathbb S^2},$$
where $\ud \sigma_{\mathbb S^2}$ is the standard metric on the unit round sphere and $r$ is the area-radius of the orbit of the symmetry group $SO(3)$. We normalize the coordinates so that $u = v$ on the axis of symmetry $\Gmm = \set{r = 0}$. 
Define the Hawking mass $m$ by the relation
\begin{equation}\label{def:mass}
\Omg^{2} := - \frac{4 \rd_{u} r \rd_{v} r}{1 - \frac{2m}{r}}.
\end{equation}
The solutions that we will construct have the following Penrose diagrams\footnote{In our theorem, we only control part of the maximal globally hyperbolic developments. The fact that the solutions indeed have the Penrose diagrams as claimed is a result of Christodoulou \cite{Christodoulou:1991}.}:

\begin{figure}[h] \label{fig:BH}
\begin{center}
\def\svgwidth{130px}
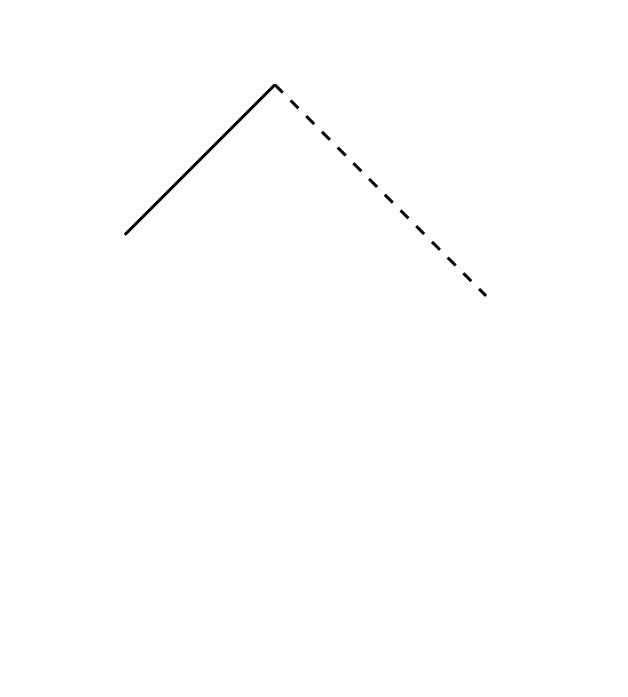 
\caption{}
\end{center}
\end{figure}

In particular, the solutions we construct arise from characteristic initial data given on an outgoing null cone. According to the Penrose diagram above, one can define null infinity as an idealized boundary which is in $1$-to-$1$ correspondence to outgoing radial null curves such that $r\to\infty$. The event horizon is then define to be the future boundary of the past of null infinity. In an appropriate double null $(u,v)$ coordinate system, we define 
\begin{itemize}
\item $u=u_0$ corresponds to the initial outgoing null cone,
\item $u=u_{\mathcal E\mathcal H}$ is the event horizon,
\item $v=\infty$ to (formally) be future null infinity. 
\end{itemize}
We can now defined the notions of initial and final Bondi masses as appropriate limits of the Hawking mass (cf. \eqref{def:mass}).
\begin{definition}
Let the \emph{Bondi mass} be a function of $u$ along null infinity:
$$m_{Bondi}(u):=\lim_{v\to \infty} m(u,v).$$
Define also the \emph{initial Bondi mass} $m_i$ and \emph{final Bondi mass} $m_f$ by 
$$m_i:=\lim_{u\to u_0^+} m_{Bondi}(u),\quad m_f:=\lim_{u\to u_{\mathcal E\mathcal H}^-} m_{Bondi}(u).$$
\end{definition}
The following is the main result of this paper:
\begin{theorem}\label{thm:main}
Let $M_f$ and $M_i$ be two arbitrary positive real numbers satisfying $M_i\geq M_f > 0$. Then for every $\ep>0$, there exists a spherically symmetric solution to \eqref{ESS} with Penrose diagram given by Figure~\ref{fig:BH} such that
$$|m_i-M_i|\leq \ep,\quad |m_f-M_f|\leq \ep.$$
\end{theorem}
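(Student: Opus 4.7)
The plan is to engineer characteristic initial data on $C_{u_0}$ by concatenating, on disjoint $v$-intervals, an \emph{inner} Christodoulou-type short pulse region that triggers trapped surface formation with Hawking mass $\approx M_f$, and an \emph{outer} LOY-type region to which Theorem~\ref{thm:main.finite} applies, designed so that the Hawking mass grows from $\approx M_f$ at the inner boundary of the outer piece up to $\approx M_i$ as $v\to\infty$. Because the smallness condition in \cite{LOY} is expressed in dimensional norms compatible with non-decaying data, we retain the freedom to prescribe an arbitrarily large initial Bondi mass in the outer piece while still controlling the evolution up to null infinity.

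For the inner piece, I would place a short pulse on an interval $[v_*, v_* + \delta]$ with $r(u_0, v_*) \sim 2 M_f$ and amplitude $\rd_v \phi \sim \delta^{-1/2}$, tuned so that the Hawking mass jumps from $O(\ep)$ at $v_*$ to $M_f + O(\ep)$ at $v_* + \delta$. The short pulse estimates of \cite{Chr}, adapted to spherical symmetry, show that such data focuses to form a trapped surface with Hawking mass $\geq M_f/2$ before the pulse interacts significantly with the outer region, and that the radiation of mass to null infinity emitted by this piece can be bounded by $O(\ep)$ by choosing $\delta$ small.

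For the outer piece $v > v_* + \delta$, one uses LOY-type data---small in the dimensional norms of \cite{LOY}---to increase the Hawking mass from $M_f + O(\ep)$ to $M_i + O(\ep)$ at $v = \infty$. The key calibration is to place the additional $M_i - M_f$ units of mass sufficiently far out along $C_{u_0}$ (concentrated in outgoing modes at large $r$) so that upon evolution this mass propagates essentially undisturbed to null infinity rather than being absorbed by the forming black hole. The LOY estimates then yield global regularity in the outer region and control of $m_{Bondi}(u)$ for $u > u_0$, giving $m_i = M_i + O(\ep)$; the fact that only an $O(\ep)$ fraction of this outgoing mass is captured by the black hole then yields $m_f = M_f + O(\ep)$.

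The main obstacle is the matching of the two regimes: one must show that (a) the LOY estimates are stable under the perturbation caused by the short pulse propagating in from the left, so that the outer region can still be evolved all the way to $\calI^+$; and (b) the outgoing energy in the LOY region is not significantly captured by the black hole. For (a), one exploits that the short pulse transports high-frequency data primarily along ingoing null directions into the forming black hole, so its influence on the outer LOY region, governed by low-frequency dimensional norms, remains small and can be absorbed via a bootstrap/iteration argument. For (b), one uses the monotonicity of the Hawking mass along $\calI^+$ together with the dominant-outgoing character of the LOY profile: by placing the additional mass sufficiently far out along $C_{u_0}$, one ensures it has propagated past $r \gg 2 M_f$ before the event horizon forms, so at most $O(\ep)$ of it crosses the horizon. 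Tuning $\delta$, the location $v_*$, and the LOY profile appropriately then yields the $\ep$-approximation asserted in Theorem~\ref{thm:main}.
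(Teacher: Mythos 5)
Your overall strategy---concatenating a Christodoulou short pulse carrying mass $\approx M_f$ with LOY-type data carrying the remaining $M_i-M_f$ on disjoint $v$-intervals of $C_{u_0}$, then invoking monotonicity of the Hawking mass---is the paper's strategy. But you have placed the two pieces in the wrong order, and the reversal is fatal. On an \emph{outgoing} cone $C_{u_0}$ the free data $\rd_v(r\phi)$ prescribes radiation that is constant along \emph{ingoing} cones $\uC_{v}$: in the linearized picture $r\phi(u,v)\approx h(v)-h(u)$ with $h'=\tfrac12\Phi$, so a wave packet supported at $v\approx V$ travels \emph{inward} (at fixed $v$, $r$ decreases as $u$ increases), and its energy reaches $\calI^{+}$ only at retarded times $u\approx V$, after reflecting off the axis at $(V,V)$. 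One cannot prescribe ``outgoing modes at large $r$'' on $C_{u_0}$. Consequently, if the short pulse sits at small $v$ (so the trapped surface and event horizon form at $u_{\calE\calH}\lesssim v_*$) while the extra $M_i-M_f$ units of mass sit at $v\gg v_*$, those packets never reflect: the ingoing cones carrying them cross the event horizon at $r\approx 2M_f$ and are absorbed. They therefore contribute to $m_{Bondi}(u)$ for every $u<u_{\calE\calH}$, and you end up with $m_f\approx M_i$, not $M_f$. Your step (b) (``it has propagated past $r\gg 2M_f$ before the event horizon forms'') rests on the misconception that data at large $v$ is outward-moving.

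The paper uses the opposite order: the small, mean-zero LOY bumps are placed at $v=v_*+i^2\le u_0'$ (small $v$), where the axis is still regular, so they reflect and radiate the $M_i-M_f$ units across $\calI^{+}$ at retarded times $u\le u_0'$; the short pulse comes \emph{after}, at $v\in[v_0,v_0+\de]$ with $v_0\gg u_0'$, and collapses to a trapped surface at $(u_1,v_0+\de)$ with $u_1>u_0'$. A second, related defect of your ordering: your ``outer LOY region'' lies to the future of an ingoing cone on which $m\approx M_f$, i.e., it is a black hole exterior where $2m/r=O(1)$ near the horizon, so Theorem~\ref{thm:main.finite} (which starts from a regular axis point with $m=0$ and propagates $2m/r<1/2$) does not apply there; controlling that region is essentially the Schwarzschild-exterior stability problem the paper deliberately avoids (this is precisely why the $\ep$ cannot be removed for free, cf.\ Remark~\ref{rem:no-eps}). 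With the paper's ordering, all estimates live either in the LOY region $\{0\le u\le v\le v_0\}$, in the short pulse rectangle $[0,u_1]\times[v_0,v_0+\de]$, or in $[0,u_0']\times[v_0+\de,\infty)$, and the uncontrolled remainder of the exterior is handled purely by the monotonicity of $m$.
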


\begin{remark} [Removing the $\ep$-error]\label{rem:no-eps}
Assuming that $m_{f}$ depends continuously on the initial data in an appropriate topology, we may remove the $\eps$ in Theorem~\ref{thm:main} when $M_{i} > M_{f}$ via a soft continuity argument, and establish the following statement:
\begin{quote}
Let $M_f$ and $M_i$ be two arbitrary positive real numbers satisfying $M_i > M_f$. Then there exists a spherically symmetric solution to \eqref{ESS} with Penrose diagram given by Figure~\ref{fig:BH} such that $m_{i} = M_{i}$ and $m_{f} = M_{f}$.
\end{quote}
We will present this soft continuity argument, along with the precise continuity property of $m_{f}$ that we require (Proposition~\ref{prop:mf-cont-dep}), in Section~\ref{subsec:no-eps} below.

The needed continuity property of $m_{f}$ (Proposition~\ref{prop:mf-cont-dep}) should be a consequence of asymptotic stability of the exterior of the Schwarzschild spacetime as a solution to \eqref{ESS} in spherical symmetry. Unfortunately, to our knowledge there is no direct account of this result in the literature, although it can be inferred using techniques in some existing related works \cite{DR, holzegel, LO-ext}. 
To avoid lengthening the paper, we have chosen \underline{not} to pursue the proof of Proposition~\ref{prop:mf-cont-dep} (hence keeping the $\eps$ in the statement our main theorem). 
\end{remark}

\begin{remark}[Monotonicity of Bondi mass and positivity of mass]
It is well-known that 
\begin{enumerate}
\item the Bondi mass is non-increasing along future null infinity,
\item the final Bondi mass $m_f\geq 0$.
\end{enumerate} 
These facts can easily be inferred from the equations in Section~\ref{sec.prelim} and the condition $m=0$ at $r=0$. As a consequence, it is necessary to require $M_i\geq M_f\geq 0$ in the main theorem. Notice that allowing $\ep$-errors, we can indeed cover this whole range of mass parameters\footnote{For $M_i>M_f=0$, this can be achieved by repeated applications of Theorem~\ref{thm:main} for a fixed $M_i$ and a sequence $(M_f)_n\to 0$.}. Without $\ep$-errors (see Remark~\ref{rem:no-eps}), our theorem does not cover the case $M_i>M_f=0$: to our knowledge, it is not known whether there are black hole solutions with $m_f=0$. On the other hand, $M_i>M_f=0$ can easily be achieved (exactly) by dispersive solutions using the scaling transformation $(r,\phi,m)\mapsto (ar, \phi, am)$ for $a>0$.
\end{remark}

\begin{remark}[Spacetimes with a complete regular past]
The theorem above is stated for a characteristic initial value problem starting from an outgoing null cone. In fact, proceeding\footnote{Note that the procedure of passing to a limit to obtain a spacetime all the way up to past null infinity has already been introduced in \cite{Chr}.} as in \cite{LOY}, one can consider a sequence of such initial outgoing null cones and pass to a limit to past null infinity to obtain a black hole spacetime with a complete regular past. This can be viewed as ``prescribing data on past null infinity''. \textbf{In this case, the ``initial mass'' can be understood as the ADM mass of the spacetime.} We omit the details.
\end{remark}

Theorem~\ref{thm:main} gives the following 
\begin{corollary}\label{cor:main}
Let $\mathcal S$ be the set of all spherically symmetric characteristic initial data (in a sense to be made precise in Section~\ref{sec.prelim}) for which the maximal globally hyperbolic development has a trapped surface and a black hole region. Then
$$\inf_{\calS} \f{m_f}{m_i}=0,\quad \sup_{\calS} \f{m_f}{m_i}=1.$$
\end{corollary}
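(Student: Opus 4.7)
The plan is to read off Corollary~\ref{cor:main} directly from Theorem~\ref{thm:main}, using the mass positivity/monotonicity facts recalled in the preceding remark. First I would establish the trivial bounds $0 \leq m_{f}/m_{i} \leq 1$ on all of $\calS$: the upper bound is monotonicity of $m_{Bondi}$ along $\calI^{+}$, and the lower bound is $m_{f}\geq 0$. For the ratio to be well-defined one needs $m_{i} > 0$; this follows from the presence of a trapped sphere in the maximal development (on which $m \geq r/2 > 0$) together with monotonicity of the Hawking mass, which propagates positivity out to $\calI^{+}$ and thence to $u = u_{0}$. Hence $\inf_{\calS} m_{f}/m_{i} \geq 0$ and $\sup_{\calS} m_{f}/m_{i} \leq 1$.

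For the matching supremum, fix any $M > 0$, take a sequence $\eps_{n} \downarrow 0$, and apply Theorem~\ref{thm:main} with $M_{i} = M_{f} = M$ and $\eps = \eps_{n}$. Each resulting solution has the Penrose diagram of Figure~\ref{fig:BH}, so it belongs to $\calS$ (the event horizon $\calE\calH$ bounds a nonempty black hole region, and in spherical symmetry this region contains trapped spheres). Since $\eps_{n} < M$ eventually, the ratio is well-defined for large $n$ and the prescribed mass bounds give
\[
\frac{m_{f}^{(n)}}{m_{i}^{(n)}} \geq \frac{M - \eps_{n}}{M + \eps_{n}} \to 1 \quad \text{as } n \to \infty.
\]
For the matching infimum, apply Theorem~\ref{thm:main} with $M_{i} = 1$, $M_{f} = 1/n$, and $\eps_{n} = 1/n^{2}$. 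Again each solution lies in $\calS$, and
\[
\frac{m_{f}^{(n)}}{m_{i}^{(n)}} \leq \frac{1/n + 1/n^{2}}{1 - 1/n^{2}} \to 0 \quad \text{as } n \to \infty.
\]

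The only point that requires any care---and which I expect to be the main (mild) obstacle---is the verification that the solutions furnished by Theorem~\ref{thm:main} really belong to $\calS$ as defined in the corollary, i.e.\ that their maximal globally hyperbolic developments genuinely contain both a trapped surface and a black hole region. This is however already built into the assertion of Theorem~\ref{thm:main} that the Penrose diagram is Figure~\ref{fig:BH}, the global structure (including the trapped region in the interior of $\calE\calH$) being justified by \cite{Christodoulou:1991} as in the footnote to the figure. Thus no additional analytic work beyond invoking Theorem~\ref{thm:main} and that citation is needed.
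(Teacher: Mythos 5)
Your proposal is correct and is exactly the deduction the paper intends: the corollary is stated as an immediate consequence of Theorem~\ref{thm:main}, and your two sequences (taking $M_i=M_f=M$ with $\eps_n\downarrow 0$ for the supremum, and $M_f=1/n$ with $\eps_n=1/n^2$ for the infimum, cf.\ the paper's footnote on the case $M_i>M_f=0$) together with the trivial bounds $0\leq m_f/m_i\leq 1$ from mass monotonicity/positivity are precisely what the authors leave implicit; membership in $\calS$ is likewise handled as you say, via the trapped sphere from Proposition~\ref{prop:shortpulse} and the Penrose diagram justified by \cite{Christodoulou:1991}. The only cosmetic difference is that the paper additionally remarks that the $2$-ended Schwarzschild solution \emph{attains} the supremum, but that solution lies outside the $1$-ended characteristic class of Section~\ref{sec.prelim}, so your approximating sequence is the right argument for $\calS$ as defined.
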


We briefly comment on both of these statements. The infimum is motivated in part by the critical collapse picture. This is a conjectural picture put forth by Choptuik \cite{Choptuik}, whose numerical work suggests that for a generic $1$-parameter family rescaling of initial data connecting black hole solutions to dispersive solutions, the masses of the black holes tend to $0$ with a universal rate at the threshold. If this picture is valid, then there are many $1$-parameter families of data sets parametrized by $\lambda\in (0,\lambda_*)$, such that each data set leads to a black hole with positive final mass and such that $\inf_{\lambda} \f{m_f}{m_i} = \lim_{\lambda\to 0} \f{m_f}{m_i} = 0$. While we are far from proving this, we show at least that if one takes the infimum over \emph{all} initial data sets (as opposed to a $1$-parameter family\footnote{Notice moreover that at the limit the Choptuik solutions conjecturally tend to a naked singularity demonstrating discrete self-similarity. However, if one were to take a limit in our case -- even though our estimates are too weak to justify such a limit -- one would heuristically obtain a dispersive solution constructed in \cite{LOY}.}) which lead to a black hole, then the infimum is indeed $0$. 

On the other hand, it is easy to see that the ($2$-ended) Schwarzschild solution with any positive mass parameter achieves the supremum in Corollary~\ref{cor:main}. Our construction in particular gives examples of black hole solutions arising from $1$-ended initial data which are close to a Schwarzschild solution in a large subset of the spacetime. In this special case, the construction can be viewed as a direct adaptation of the ideas of \cite{Chr} to our setting.

\section{Basic setup and main result in \cite{LOY}}\label{sec.prelim}

In this section, we recall the geometric setup in \cite{LOY}, which we will use for the rest of the paper. As in \cite{LOY}, we will adopt the point of view that under spherical symmetry, the (3+1)-dimensional Einstein--scalar field equation \eqref{ESS} reduces to a (1+1)-dimensional system.

Fix $u_0\in \mathbb R$. Consider the domain in $\mathbb R^{1+1}$
\begin{equation*}
	\PD = \set{(u, v) \in \bbR^{1+1} : u \in [u_0, \infty), \, v \in [u, \infty)},
\end{equation*}
with partial boundary $\set{(u_0,v)\in \PD: v\in [u_0,\infty)}$ and
\begin{equation*}
	\Gmm = \set{(u, u) \in \PD : u \in [u_0, \infty)}.
\end{equation*}
We define causality in $\PD$ with respect to the ambient metric $m = - \ud u \cdot \ud v$ of $\bbR^{1+1}$, and the time orientation in $\PD$ so that $\rd_{u}$ and $\rd_{v}$ are future pointing. $C_{u}$ and $\uC_{v}$ will be used to denote constant $u$ and $v$ curves in $\PD$, respectively. 
Moreover, given $- \infty < u_{0} < u_{1} < \infty$, let
\begin{align*}
	\PD_{[u_{0}, u_{1}]} =& \set{(u, v) \in \PD : u \in [u_{0}, u_{1}]}.
%	\PD_{[u_{0}, \infty)} =& \set{(u, v) \in \PD : u \in [u_{0}, \infty)}.
\end{align*}
We endow on $\PD$ a Lorentzian metric
$$g=-\Omg^2 \ud u \cdot \ud v+r^2 \ud \sigma_{\mathbb S^2},$$
where $\ud \sigma_{\mathbb S^2}$, as before, is the standard round metric on $\mathbb S^2$ with radius $1$. We require $\Omg^2>0$ and $r\geq 0$, with equality exactly at $\Gmm$.

In spherical symmetry, \eqref{ESS} reduces to the following system of $(1+1)$-dimensional wave equations for $(r,\phi,\Omg)$
\begin{equation}\label{WW.SS}
\left\{
\begin{aligned}
\rd_u\rd_v r=&-\f {\Omg^2}{4 r}-\f{\rd_u r\rd_v r}{r},\\
\rd_u\rd_v \phi=&-\f{\rd_v r\rd_u\phi}{r}-\f{\rd_u r\rd_v\phi}{r},\\
\rd_u\rd_v\log\Omg=&-\rd_u\phi\rd_v\phi+\f{\Omg^2}{4r^2}+\f{\rd_u r\rd_v r}{r^2},
\end{aligned}
\right.
\end{equation}
coupled with the Raychaudhuri equations
\begin{equation}\label{eqn.Ray}
\left\{
\begin{aligned}
\rd_v\left(\f{\rd_v r}{\Omg^2}\right)=&-\f{r(\rd_v\phi)^2}{\Omg^2},\\
\rd_u\left(\f{\rd_u r}{\Omg^2}\right)=&-\f{r(\rd_u\phi)^2}{\Omg^2}.
\end{aligned}
\right.
\end{equation}

In terms of the Hawking mass (cf. \eqref{def:mass}),
\eqref{ESS} reduces to the following system of equations for $(r,\phi, m)$ in $(1+1)$ dimensions:%, which we will also call the \emph{spherically symmetric Einstein-scalar-field} (SSESF) system:
\begin{equation} \label{eq:SSESF} %\tag{SSESF}
\left\{
\begin{aligned}
	\rd_{u} \rd_{v} r = & \frac{2m \rd_{u} r \rd_{v}r }{(1-\frac{2m}{r}) r^{2}}, \\
	\rd_{u} \rd_{v} (r \phi) = & \frac{2m \rd_{u} r \rd_{v}r }{(1-\frac{2m}{r}) r^{2}} \phi, \\
	\rd_{u} r \rd_{u} m = & \frac{1}{2} (1-\f{2m}{r}) r^{2} (\rd_{u} \phi)^{2} , \\
	\rd_{v} r \rd_{v} m = & \frac{1}{2} (1-\f{2m}{r}) r^{2} (\rd_{v} \phi)^{2} ,
\end{aligned}
\right.
\end{equation}
As long as $1-\f{2m}r\neq 0$, this is equivalent to the combined system consisting of \eqref{WW.SS} and \eqref{eqn.Ray}. In what follows, it will be convenient to use both formulations of the equations.

We recall the notion of a $C^{k}$ solution to \eqref{eq:SSESF} given in \cite{LOY}.
\begin{definition} \label{def:C1-sol}
Let $- \infty < u_{0} < u_{1} < \infty$. We say that a triple $(r, \phi, m)$ of real-valued functions on $\PD_{[u_{0}, u_{1}]}$ is a \emph{$C^{k}$ solution to \eqref{eq:SSESF}} if it satisfies this system of equations and the following conditions hold:
\begin{enumerate}
\item The following functions are $C^{k}$ in $\PD_{[u_{0}, u_{1}]}$:
\begin{equation*}
	\rd_{u} r, \ \rd_{v} r, \ \phi, \ \rd_{v}(r \phi), \ \rd_{u}(r \phi).
\end{equation*}
\item For $\rd_{v} r$ and $\rd_{u} r$, we have 
\begin{equation*}
	\inf_{\PD_{[u_{0}, u_{1}]}} \rd_{u} r > -\infty, \quad \inf_{\PD_{[u_{0}, u_{1}]}} \rd_{v} r > 0.
\end{equation*}
\item For each point $(a, a) \in \Gmm \cap \PD_{[u_{0}, u_{1}]}$, the following boundary conditions hold:
\begin{align} 
\label{eq:bc4r}
	r (a, a) =& 0, \\
\label{eq:bc4m}
	m(a, a) =& 0.
%	\lim_{v \to u+} r (u, v) =& \lim_{u \to v-} r (u, v) = 0, \\
%	\lim_{v \to u+} m (u, v) = & \lim_{u \to v-} m (u, v) = 0, \\
%	\lim_{v \to u+} \bb( \rd_{v} r + \rd_{u} r \bb) (u, v) = & \lim_{v \to u-} \bb( \rd_{v} r + \rd_{u} r \bb) (u, v) = 0, \\
%	\lim_{v \to u+} \bb( \rd_{v}(r \phi) + \rd_{u}(r \phi) \bb) (u, v) = &
%	\lim_{u \to v-} \bb( \rd_{v}(r \phi) + \rd_{u}(r \phi) \bb) (u, v) = 0.
\end{align}
\end{enumerate}

\end{definition}

The boundary condition \eqref{eq:bc4r} can be combined with the regularity assumption to deduce higher order boundary conditions for $r$ and $r \phi$. More precisely, let $(r, \phi, m)$ be a $C^{k}$ solution on $\PD_{[u_{0}, u_{1}]}$. Since $u = v$ on $\Gmm = \set{ r = 0}$, we have
\begin{equation} \label{eq:bc4high-d}
	(\rd_{v} + \rd_{u})^{\ell} r (a, a) = 0, \quad 
	(\rd_{v} + \rd_{u})^{\ell} (r \phi) (a, a) = 0, 
\end{equation}
for every $\ell = 0, \ldots, k$ and $(a, a) \in \Gmm \cap \PD_{[u_{0}, u_{1}]}$.

Consider the characteristic initial value problem for \eqref{eq:SSESF} with data 
\begin{equation} \label{eq:ini-dvrphi}
	(\rd_v r)^{-1} \rd_{v} (r \phi) \restriction_{C_{u_{0}}} = \Phi, 
\end{equation}
and initial gauge condition\footnote{We call \eqref{eq:ini-dvr} an initial gauge condition since it can be enforced for an arbitrary initial data set by a suitable reparametrization of the coordinate $v$, which is a gauge symmetry of the problem. See Remark~\ref{rem:gauge}.}
\begin{equation} \label{eq:ini-dvr}
	(\rd_v r) \restriction_{C_{u_{0}}} = \frac{1}{2},
\end{equation}
on some outgoing null curve $C_{u_{0}}$. This problem is locally well-posed for $C^{k}$ data $(k \geq 1)$ in the following sense: Given any $C^{k}$ data $\Phi$ with $k \geq 1$, there exists a unique $C^{k}$ solution to \eqref{eq:SSESF} on $\PD_{[u_{0}, u_{1}]}$ for some $u_{1} > u_{0}$, which only depends on $u_{0}$ and the $C^{k}$ norm of $\Phi$. We omit the standard proof.

\begin{remark} \label{rem:gauge}
The system \eqref{eq:SSESF} is invariant under reparametrizations of the form $(u, v) \mapsto ( U(u), V(v) )$; this is the \emph{gauge invariance} of \eqref{eq:SSESF}. Note that we have implicitly fixed a gauge in the setup above, by requiring that $u = v$ on $\Gmm$ and imposing the initial gauge condition \eqref{eq:ini-dvr}.
\end{remark}

We have introduced all the necessary conventions to recall the main result of \cite{LOY}:
\begin{theorem} \label{thm:main.finite}
Consider the characteristic initial value problem from an outgoing curve $C_{u_0}$ with $v\geq u_0$, $\rd_v r \rst_{C_{u_0}} = \frac{1}{2}$ and $r(u_0,u_0)=m(u_0,u_0)=0$.
Suppose the data on the initial curve $C_{u_0}$ are given by
\[
2 \rd_{v}(r \phi)(u_0, v)=\Phi(v),
\]
where $\Phi:[u_0,\infty)\to\mathbb R$ is a $C^1$ function satisfying the following conditions for some $\dc>0$:
\begin{equation}
 \label{eq:IDcond}
\int_{u}^{v}|\Phi(v')|\, \ud v'\leq \eta(v-u)^{1-\dc}, \quad |\Phi'(v)| \leq \eta, \quad |\Phi(v)| \leq \eta, \quad \forall v\geq u\geq u_0.
\end{equation}
Then there exists $\eta>0$ depending only on $\dc$ such that the the unique solution to \eqref{eq:SSESF} arising from the given data is future causally geodesically complete. Moreover, the solution satisfies the following uniform a priori estimates\footnote{The estimates $\rd_v r\leq \f 12$ and $\f{2m}{r}\geq 0$ are not explicitly stated in \cite{LOY}. However, these estimates are almost trivial. $\rd_v r\leq \f 12$ holds since by \eqref{eq:main:geo}, the right hand side of the $\rd_u\rd_v r$ equation in \eqref{eq:SSESF} is negative, and $\rd_v r=\f 12$ initially. $\f{2m}{r}\geq 0$ simply follows from positivity of $m$ and $r$.}:
\begin{equation} \label{eq:main:geo}
	\f 12\geq \rd_v r>\f 13,\quad -\f 16>\rd_u r >-\f 23,\quad 0 \leq \f{2m}{r}<\f 12, \\
\end{equation}
and
\begin{equation} \label{eq:main:apriori}
\begin{gathered}
	\abs{\phi} \leq C \eta \min \set{1, r^{-\gamma}}, \quad
	\abs{\rd_{v} (r \phi)} \leq C ( \abs{\Phi(v)} + \eta \min \set{1, r^{-\gamma}}), \quad
	\abs{\rd_{u} (r \phi)} \leq C \eta, \\
\abs{\rd_{v}^{2} (r \phi)} + \abs{\rd_{v}^{2} r} + \abs{\rd_{u}^{2} (r \phi)} + \abs{\rd_{u}^{2} r} \leq C \eta
\end{gathered}
\end{equation}
for some constant $C>0$ depending only on $\dc$.
\end{theorem}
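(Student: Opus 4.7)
The proof is a bootstrap-continuity argument in $u$. Let $u_{*}$ be the supremum of $u_{1} \geq u_{0}$ such that a $C^{1}$ solution exists on $\PD_{[u_{0}, u_{1}]}$ and satisfies the conclusions \eqref{eq:main:geo}--\eqref{eq:main:apriori} with all constants doubled. By the local well-posedness stated after Definition~\ref{def:C1-sol}, this set is nonempty and open in $u_{1}$, so if one can improve the doubled estimates back to the sharp ones on $\PD_{[u_{0}, u_{*})}$, then $u_{*} = \infty$ and a global $C^{1}$ solution exists. Future causal geodesic completeness is then read off from \eqref{eq:main:geo}: since $\Omg^{2} = -4 \rd_{u} r \rd_{v} r / (1 - 2m/r)$ is uniformly bounded above and below, affine length along any outgoing null geodesic is comparable to its $v$-extent; any ingoing null geodesic either escapes to $v = \infty$ or terminates regularly on $\Gmm$.

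\textbf{Geometric sector.} The mass equations in \eqref{eq:SSESF}, together with the signs $\rd_{u} r < 0$, $\rd_{v} r > 0$, $2m/r < 1$ given by the bootstrap, yield $\rd_{u} m \leq 0$ and $\rd_{v} m \geq 0$; combined with $m \rst_{\Gmm} = 0$ this forces $m \geq 0$ throughout. The first equation in \eqref{eq:SSESF} then gives $\rd_{u} \rd_{v} r \leq 0$, so $\rd_{v} r \leq \rd_{v} r \rst_{C_{u_{0}}} = 1/2$; the lower bound $\rd_{v} r > 1/3$ follows by integrating this equation in $u$ and using an auxiliary bound $m/r \leq C \eta^{2}$. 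The latter is the first place where the dimensional hypothesis $\int_{u}^{v} |\Phi| \, \ud v' \leq \eta (v-u)^{1-\gmm}$ enters: integrating $\rd_{v} m = \tfrac{1}{2}(1 - \tfrac{2m}{r}) r^{2} (\rd_{v}\phi)^{2}/\rd_{v} r$ from the axis and using bootstrap control of $\rd_{v}\phi = (\rd_{v}(r \phi) - \phi \rd_{v} r)/r$ produces the desired quadratic-in-$\eta$ estimate. The bounds $\rd_{u} r \in (-2/3, -1/6)$ follow symmetrically, using the axis relation $\rd_{u} r(a, a) = -\rd_{v} r(a, a)$ from \eqref{eq:bc4high-d}; in particular one obtains $r \sim v - u$ uniformly in the bootstrap region.

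\textbf{Scalar field and closing the bootstrap.} Integrating $\rd_{v}(r \phi)$ from the axis where $r \phi = 0$, together with the bootstrap and \eqref{eq:IDcond},
\[
	|r \phi|(u, v) \leq \int_{u}^{v} |\rd_{v}(r \phi)|(u, v') \, \ud v' \leq C \eta (v-u)^{1-\gmm} + C \eta \int_{u}^{v} \min\set{1, r^{-\gmm}} \, \ud v',
\]
which after dividing by $r \sim v - u$ gives $|\phi| \leq C \eta \min\set{1, r^{-\gmm}}$. To close $\rd_{v}(r \phi)$ itself I integrate the wave equation
\[
	\rd_{v}(r \phi)(u, v) = \tfrac{1}{2} \Phi(v) + \int_{u_{0}}^{u} \frac{2 m \rd_{u} r \rd_{v} r}{(1 - 2m/r) r^{2}} \phi \, \ud u'.
\]
Inserting $m/r \leq C \eta^{2}$, $|\phi| \leq C \eta r^{-\gmm}$ and the geometric bounds renders the integrand $O(\eta^{3} r^{-1-\gmm})$, so the total error is $O(\eta^{3} r^{-\gmm})$, which beats the bootstrap constant $2 C \eta r^{-\gmm}$ once $\eta$ is small. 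The estimate for $\rd_{u}(r \phi)$ is analogous, integrated in $v$ from $\Gmm$ using the boundary value $\rd_{u}(r \phi)(a, a) = -\rd_{v}(r \phi)(a, a) = O(\eta)$ from \eqref{eq:bc4high-d}. The second-derivative bounds in \eqref{eq:main:apriori} are obtained by differentiating these transport equations once more and running a near-identical Gronwall argument, using the first-derivative bounds to control all source terms.

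\textbf{Main difficulty.} The delicate step is the bookkeeping of $r$-weights: every nonlinear error in the bootstrap hierarchy must close with the correct homogeneity in $r$, matching the scaling structure encoded in the dimensional norms, \emph{and} with a strict gain in $\eta$, so that the final smallness constant depends only on $\gmm$. This careful matching is precisely the innovation of \cite{LOY} and is the step I expect to require the most care; in particular, the fact that the input $\Phi$ is only dimensionally small (not small in any $L^{1}$ or $L^{\infty}$ norm in $v$) means one cannot afford any loss in $r$-weight when propagating from $C_{u_{0}}$ into the bulk.
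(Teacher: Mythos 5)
This theorem is not proved in the paper at all: it is imported verbatim from \cite{LOY} (``we recall the main result of \cite{LOY}''), and the only argument the authors supply is the footnote justifying the two estimates $\rd_v r\leq \f 12$ and $\f{2m}{r}\geq 0$ that are not explicit in \cite{LOY}. Your ``geometric sector'' paragraph reproduces exactly the footnote's reasoning ($m\geq 0$ from the monotonicity of the Hawking mass together with $m\rst_\Gmm=0$, hence $\rd_u\rd_v r\leq 0$, hence $\rd_v r\leq \rd_v r\rst_{C_{u_0}}=\f 12$), so on the only point where the paper actually argues, you agree with it. Everything else in your proposal is an attempted reconstruction of the proof in \cite{LOY} itself, and as an outline it does identify the right strategy: a bootstrap driven by the dimensional smallness of $\Phi$, with the ratio $\f{2m}{r}$ as the key small quantity.

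As a standalone proof, however, the sketch has a concrete gap at the axis. Closing the bootstrap repeatedly requires integrals of the form $\int \f{2m(-\rd_u r)}{(1-\f{2m}{r})r^2}\,\ud u' = \int \f{2m}{(1-\f{2m}{r})r^{2}}\,\ud r$ (to get $\rd_v r>\f13$ via $\rd_u\log\rd_v r$) and $\int \f{2m}{r^{2}}\abs{\phi}\,\ud r$ (to close $\rd_v(r\phi)$). The auxiliary bound $\f{m}{r}\ls\eta^2$ that you invoke makes these integrands $\ls \eta^2 r^{-1}$ and $\ls\eta^3 r^{-1-\gmm}$ respectively, neither of which is integrable down to $r=0$; so the argument as written does not close near $\Gmm$, which is precisely the delicate region. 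What is needed there is $m\ls\eta^2 r^{3}$ (or at least $\f{m}{r}\ls \eta^2 r^{a}$ for some $a>0$), and this in turn requires showing that $r\rd_v\phi=\rd_v(r\phi)-\phi\,\rd_v r$ vanishes at a polynomial rate at the axis --- i.e., a modulus-of-continuity/second-derivative bound on $\rd_v(r\phi)$, which is where the hypothesis $\abs{\Phi'}\leq\eta$ enters. Consequently the second-derivative estimates cannot be run ``afterwards'' as your plan proposes; they must be coupled into the same bootstrap hierarchy as the first-order quantities. This near-axis bookkeeping, together with completeness of timelike (not merely radial null) geodesics, which your final step does not address, is where the real work of \cite{LOY} lies.
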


\begin{remark} 
The first two estimates in \eqref{eq:IDcond} correspond to initial smallness of the pair of \emph{dimensional} norms mentioned in the introduction.
\end{remark}

\begin{remark}
We will \underline{not} be using Theorem~\ref{thm:main.finite} to construct complete solutions. Instead, we will use Theorem~\ref{thm:main.finite} together finite speed of propagation to obtain a regular solution in a large region of spacetime with good estimates.
\end{remark}

\section{Proof of Theorem~\ref{thm:main}}
\subsection{Choice of initial data and remarks on the smallness and largeness parameters} \label{subsec:id}

Let $M_i\geq M_f>0$ and $\ep>0$ as in Theorem~\ref{thm:main} are given. Without loss of generality, we can assume that $\ep$ is small with respect to $M_i$ and $M_f$. In the argument below, we fix $u_{0} = 0$.

We introduce the large parameters $1\ll v_* \ll v_0$ and the small parameter $0< \eta_{0}, \,\de\ll 1$, all of which are to be chosen later. 

To specify the initial data, we first define a $C^\infty_c([0,1])$ function $\chi$ satisfying
\begin{equation}\label{chi.prop}
\int_0^1 \chi(s)\, \ud s=0,\quad \int_0^1 \chi^2(s)\, \ud s=1,\quad -2\leq  \chi\leq 2.
\end{equation}
Also, let $\eta \in [0, \eta_{0}]$.

\begin{definition}\label{def:data}
We define the initial data as follows:
\begin{itemize}
\item For $v\in [0, v_0]$,
$$\Phi(v)=\sum_{i=1}^{\lfloor \f{M_i-M_f}{\eta_{0}^2} \rfloor } 2 \eta \chi(v-v_*-i^2).$$
\item For $v\in [v_0, v_0+\de]$,
$$\Phi(v)=2 M_f^{\f 12} \de^{-\f 12} \chi(\f{v-v_0}{\delta}).$$
\item For $v\geq v_0+\delta$, let $\Phi(v)\equiv 0$.
\end{itemize}
\end{definition}

It will be convenient to denote
$$u_0' = v_*+\left(\lfloor \f{M_i-M_f}{\eta_{0}^{2}} \rfloor\right)^2+1.$$
Let us note that by definition,
\begin{equation}\label{Phi.vanishing}
\Phi(v)=0\quad\mbox{for } u_0' \leq v\leq v_0.
\end{equation}

We remark on the choice of the constants. 
We will first choose $v_*$ large, then $\eta_{0}$ small, then we define
$$v_0=2\left(v_*+\left(\lfloor \f{M_i-M_f}{\eta_{0}^{2}} \rfloor\right)^2+1\right).$$
In particular, we have the useful relation $v_0-\left(v_*+\left(\lfloor \f{M_i-M_f}{\eta_{0}^{2}} \rfloor\right)^2+1\right) \gtrsim v_0$. Finally, we choose $\de$ to be small. Importantly, all constants can be chosen depending on the mass parameters $M_i$ and $M_f$; $\de$ can in addition be dependent on the other parameters.

In view of the above discussions, we will use the following convention:
$\ls$ will be used such that $A\ls B$ means $A\leq CB$ for some universal constant $C>0$. $\ls_{M_f, \, M_f^{-1}}$, $\ls_{M_i}$, etc. will mean that the implicit constants may depend on the mass parameters $M_i$, $M_f$. Finally, $\ls_{v_0}$, $\ls_{M_f, v_0}$, etc. will mean that the implicit constants may, in addition, depend on $v_0$. The last convention is most convenient when the bound is given in terms of $\de$, which can be chosen depending on $v_0$.

We will also use the big-$O$ notation with similar conventions.

\subsection{Computation of the initial mass}
The first order of business is to compute the initial mass for the data given above. This will be given in Proposition~\ref{prop:initial.mass}. First, we need the following lemma.

\begin{lemma}\label{lem:data}
$\rd_v\phi (0, \cdot)$ and $\phi (0, \cdot)$ are both supported in $\left(\displaystyle\bigcup_{i=1}^{\lfloor \f{M_i-M_f}{\eta_{0}^2} \rfloor}[v_*+i^2, v_*+i^2+1]\right)\cup [v_0, v_0+\de]$ and satisfy the following estimates:
$$\left|r|\rd_v\phi|(0, v)-\f 12 \Phi(v) \right|\ls \eta v_*^{-1},\quad |\phi|(0,v)\ls \eta v_*^{-1}. $$
\end{lemma}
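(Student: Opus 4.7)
The plan is to reduce the lemma to a direct computation along $C_{u_{0}} = C_{0}$, using the initial gauge condition $\rd_{v} r = \f 12$ and the mean-zero normalization $\int_{0}^{1} \chi = 0$ from \eqref{chi.prop}. First, from $r(0,0)=0$ and $\rd_{v} r = \f 12$ on $C_{0}$, one has $r(0,v) = v/2$ throughout the initial cone. Integrating $2\rd_{v}(r\phi)(0,v) = \Phi(v)$ from the axis, where $(r\phi)(0,0) = 0$, gives $r\phi(0,v) = \f 12 \int_{0}^{v} \Phi\,\ud v'$, i.e.\ $\phi(0,v) = v^{-1}\int_{0}^{v}\Phi\,\ud v'$. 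Combined with the Leibniz identity $\rd_{v}(r\phi) = \f 12 \phi + r\rd_{v}\phi$, this yields
$$r\rd_{v}\phi(0,v) - \f 12 \Phi(v) = -\f 12 \phi(0,v),$$
so that the first estimate is equivalent to the second and the whole lemma reduces to the bound $|\phi(0,v)| \ls \eta v_{*}^{-1}$ together with the support statement.

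The key observation is that each individual bump of $\Phi$ has vanishing total integral. Indeed, by affine changes of variables,
$$\int_{v_{*}+i^{2}}^{v_{*}+i^{2}+1} \Phi\,\ud v' = 2\eta\int_{0}^{1}\chi\,\ud s = 0,\qquad \int_{v_{0}}^{v_{0}+\de}\Phi\,\ud v' = 2M_{f}^{1/2}\de^{1/2}\int_{0}^{1}\chi\,\ud s = 0.$$
Hence for $v$ lying outside all the bump intervals, the partial integrals of completed bumps telescope to zero, giving $\phi(0,v) = 0$; combined with $\Phi(v) = 0$ at such $v$, this establishes the support claim for both $\phi$ and $\rd_{v}\phi$. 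For $v$ inside the $i$-th bump of the first piece, only a partial bump contributes, and $|\chi|\leq 2$ yields $|\int_{0}^{v}\Phi| \ls \eta$; dividing by $v \geq v_{*}$ gives $|\phi(0,v)| \ls \eta v_{*}^{-1}$. Inside the short pulse $[v_{0},v_{0}+\de]$, the same reasoning gives $|\int_{0}^{v}\Phi| \ls M_{f}^{1/2}\de^{1/2}$, which can be made $\ls \eta$ by choosing $\de$ small (permissible since $\de$ is the last parameter fixed), and division by $v \geq v_{0} \gtrsim v_{*}$ again gives the required bound.

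No substantive obstacle arises; the only real content is the mean-zero condition on $\chi$ in \eqref{chi.prop}, without which $\phi(0,\cdot)$ would develop a nonzero tail extending beyond the explicit support intervals in the statement and the lemma would fail.
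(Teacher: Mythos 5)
Your proof is correct and follows essentially the same route as the paper's: the support claim from the mean-zero normalization of $\chi$, the bound $r|\phi|(0,\cdot)\ls\eta$ by integrating $\rd_v(r\phi)=\tfrac12\Phi$ (with the short-pulse contribution tamed by taking $\de$ small), division by $r=v/2\gtrsim v_*$ on the support, and the Leibniz identity $r\rd_v\phi=\rd_v(r\phi)-(\rd_v r)\phi$ to transfer the bound to the first estimate. The only difference is that you spell out the telescoping of the completed bumps explicitly, which the paper leaves implicit.
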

\begin{proof}
The support properties follow from Definition~\ref{def:data} and that $\chi$ has mean $0$ (cf. \eqref{chi.prop}). Integrating $\rd_v(r\phi)=\f 12\Phi$ and using the support properties, it also follows that $r|\phi|\ls \eta$ uniformly. (Here, we have used that while $\Phi$ is $O_{M_f}(\de^{-\f 12})$ large, by choosing $\de$ sufficiently small, its integral is bounded $\ls \de^{\f 12} M_f^{\f 12}\ls \eta$.) The estimate for $|\phi|$ then follows from the support properties. Finally, 
$$\left|r|\rd_v\phi|(0, v)-\f 12 \Phi(v) \right|\leq (\rd_v r) |\phi|(0, v)= \f 12|\phi|(0, v)$$
and thus the desired estimate follows from the already proven bound for $|\phi|$.
\end{proof}

\begin{proposition}\label{prop:initial.mass}
The initial mass $m_i:=m(0,\infty)$ satisfies
\begin{equation*}
	m_{i} = \left( \frac{\eta}{\eta_{0}}\right)^{2} (M_{i} - M_{f}) + M_{f} + O_{M_i}(\eta^2+ \eta^{2} v_*^{-\f 12}+M_f v_0^{-1}).
\end{equation*}
\end{proposition}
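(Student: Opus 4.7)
The plan is to integrate the Hawking-mass equation along the initial cone $C_{0}$ and to split the integral into contributions from the small $\eta$-pulses and from the large short pulse. On $C_{0}$, the gauge condition $\rd_v r \equiv \tfrac{1}{2}$ reduces the $v$-Raychaudhuri equation in \eqref{eq:SSESF} to $\rd_v m = (1-\tfrac{2m}{r})(r\rd_v\phi)^2$, which combined with $m(0,0)=0$ gives
$$m_i = \int_0^\infty \Bigl(1-\tfrac{2m}{r}\Bigr)(r\rd_v\phi)^2\, \ud v.$$
The plan is then to replace $r\rd_v\phi$ by $\tfrac{1}{2}\Phi$ using Lemma~\ref{lem:data}, evaluate the main term $\tfrac{1}{4}\int\Phi^2$ via the normalization $\int_0^1\chi^2=1$ from \eqref{chi.prop}, and verify that all error contributions fit into $O_{M_i}(\eta^2 + \eta^2 v_*^{-1/2} + M_f v_0^{-1})$. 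A useful preliminary is that the argument in the proof of Lemma~\ref{lem:data} in fact yields the sharper pointwise bound $|r\rd_v\phi - \tfrac{1}{2}\Phi|(0,v) \ls \eta/v$ on $\supp\Phi$ (since $r(0,v)=v/2$ there), which I will use in place of the stated uniform $\eta v_*^{-1}$.

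For the main terms, each $\eta$-pulse at $v_* + i^2$ contributes $\tfrac{1}{4}\int(2\eta\chi)^2 = \eta^2$, so summing over $i=1,\ldots,\lfloor (M_i - M_f)/\eta_0^2\rfloor$ yields $(\eta/\eta_0)^2(M_i - M_f) + O(\eta^2)$. The short pulse on $[v_0,v_0+\delta]$ similarly contributes $M_f\int_0^1\chi^2 = M_f$ after the substitution $s=(v-v_0)/\delta$. The remaining task is to control the cross-term $O(|\Phi|\eta/v)$, the lower-order $O(\eta^2/v^2)$ term, and the deviation from $1-2m/r \equiv 1$.

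The hardest step is to fit the cross term $\int |\Phi|\cdot\eta/v$ from the $\eta$-pulse region into $O(\eta^2 v_*^{-1/2})$. For the pulse at $v_*+i^2$ this integral contributes $O(\eta^2/(v_*+i^2))$, and summing over $i\ge 1$ and comparing with $\int_0^\infty (v_*+x^2)^{-1}\,\ud x \sim v_*^{-1/2}$ yields the claimed rate. This is precisely where the quadratic spacing $v_*+i^2$ of the pulse centres in Definition~\ref{def:data} is essential: any slower spacing would produce a divergent sum. The cross term from the short pulse is $O(\eta\delta^{1/2} M_f^{1/2} v_0^{-1})$, negligible upon choosing $\delta$ small. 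The $(1-2m/r)$ correction is handled by a short bootstrap: since $m$ is monotone non-decreasing along $C_0$, the main-term computation already bounds $m(0,v)\leq CM_i$; choosing $v_*$ large compared to $M_i$ then gives $2m/r \ls M_i/v$ on $\supp\Phi$. Integrating $(2m/r)(r\rd_v\phi)^2$ over the $\eta$-pulses yields $O_{M_i}(\eta^2 v_*^{-1/2})$ (by the same convergent series), and over the short pulse yields $O_{M_i}(M_f v_0^{-1})$ (using $r\approx v_0/2$ there). The remaining $O(\eta^2/v^2)$ contributions are dominated by these errors.
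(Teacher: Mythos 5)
Your proposal is correct and follows essentially the same route as the paper: integrate the $\rd_v m$ equation along $C_0$, extract the main term $\tfrac14\int\Phi^2 = (\eta/\eta_0)^2(M_i-M_f)+M_f+O(\eta^2)$ via $\int_0^1\chi^2=1$, and control the errors using the same key sum $\sum_i (v_*+i^2)^{-1}\lesssim v_*^{-1/2}$ together with $r(0,v)=v/2$. The only (immaterial) difference is that the paper writes the exact integrating-factor solution of the linear ODE for $m$ and shows the exponent is $O(\eta^2 v_*^{-1/2}+M_f v_0^{-1})$, whereas you bound the $2m/r$ correction directly via the a priori bound $m(0,v)\lesssim M_i$; both yield the same error terms.
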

\begin{proof}
By \eqref{eq:SSESF}, the mass on the $\{u=0\}$ hypersurface can be computed as follows:
\begin{equation}\label{m.data.formula}
 m(0, v) =\f 12 e^{-\int_{0}^v r\f{(\rd_v\phi)^2}{\rd_v r}(0,v') \, \ud v'}\int_{0}^v r^2\f{(\rd_v\phi)^2}{\rd_v r}(0,v') e^{\int_{0}^{v'} r\f{(\rd_v\phi)^2}{\rd_v r}(0,v'') \, \ud v''} \, \ud v' .
\end{equation}
Note that by $\rd_v r=\f 12$, \eqref{eq:main:geo} and Lemma~\ref{lem:data}, the integral that appears in the exponentials can be controlled as follows:
\begin{equation}\label{exp.est}
\int_{0}^{\infty} r\f{(\rd_v\phi)^2}{\rd_v r}(0,v') \, \ud v'\ls \left(\eta^{2} \sum_{i=1}^{\lfloor \f{M_i-M_f}{\eta_{0}^2} \rfloor}(v_*+i^2)^{-1}\right)+M_f v_0^{-1}\ls \eta^{2} v_*^{-\f 12}+M_f v_0^{-1}.
\end{equation}
where we have used $\sum_{i=1}^\infty\f{1}{a^2+i^2}\ls \f{1}{a}$.
%this is why
%$$\int_0^\infty \f{dx}{a^2+x^2}=\f{1}{a^2}\int_0^\infty \f{a\,dy}{1+y^2}=\f{\pi}{2a}.$$
Hence, using $\rd_v r=\f 12$, \eqref{chi.prop}, Definition~\ref{def:data} and Lemma~\ref{lem:data},
\begin{equation*}
\begin{split}
m(0, \infty) =&(1+O(\eta^{2} v_*^{-\f 12}+M_f v_0^{-1}))\left(\int_{0}^\infty r^2(\rd_v\phi)^2(0, v')  \, \ud v'\right) \\
=&\int_{0}^\infty \left(\f{\Phi}{2}\right)^2(0,v')  \, \ud v'+O_{M_i}(\eta^{2} v_*^{-\f 12}+M_f v_0^{-1})\\
=& \sum_{i=1}^{\lfloor \f{M_i-M_f}{\eta_{0}^2} \rfloor} \eta^2 \int_{v_*+i^2}^{v_*+i^2+1} \chi^2(v'-v_*-i^2) \, \ud v' + M_f \de^{-1} \int_{v_0}^{v_0+\delta} \chi^2(\f{v'-v_0}{\de}) \, \ud v' \\
&+O_{M_i}(\eta^{2} v_*^{-\f 12}+M_f v_0^{-1})\\
=&\eta^2 \cdot \lfloor \f{M_i-M_f}{\eta_{0}^2} \rfloor +M_f +O_{M_i}(\eta^{2} v_*^{-\f 12}+M_f v_0^{-1})\\
=&\left(\frac{\eta}{\eta_{0}}\right)^{2} (M_{i} - M_{f}) + M_{f} +O_{M_i}(\eta^2+ \eta^{2} v_*^{-\f 12}+M_f v_0^{-1}). \qedhere
\end{split}
\end{equation*}
\end{proof}

\subsection{The region before the short pulse: an application of Theorem~\ref{thm:main.finite}}
A immediate corollary of Theorem~\ref{thm:main.finite} and finite speed of propagation is the following:
\begin{proposition}
For $\eta>0$ sufficiently small, in the region $0\leq u\leq v\leq v_0$, the solution is regular and satisfy the estimates in Theorem~\ref{thm:main.finite} for $\gamma=\f 12$.
\end{proposition}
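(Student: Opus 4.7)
\medskip

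\noindent\textbf{Proof plan.} My plan is to construct an auxiliary initial data set $\tilde\Phi$ which agrees with $\Phi$ on $[0,v_0]$ but has no short pulse, verify that $\tilde\Phi$ satisfies the hypotheses \eqref{eq:IDcond} of Theorem~\ref{thm:main.finite} with $\gamma=\tfrac12$, and then invoke finite speed of propagation to transfer the estimates back to the original solution in $\{0\leq u\leq v\leq v_0\}$. Concretely, I would set
\[
	\tilde\Phi(v)=\sum_{i=1}^{\lfloor (M_i-M_f)/\eta_0^{2}\rfloor} 2\eta\,\chi(v-v_*-i^2)\quad\text{for } v\in[0,u_0'], \qquad \tilde\Phi(v)=0\quad\text{for }v\geq u_0',
\]
so that $\tilde\Phi\equiv\Phi$ on $[0,v_0]$ by \eqref{Phi.vanishing}. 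Uniqueness for the characteristic IVP together with the fact that $\rd_v$-null curves $\uC_v$ with $v\leq v_0$ emanate from initial points where $\tilde\Phi=\Phi$ show that the two solutions coincide on $\{0\leq u\leq v\leq v_0\}$.

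\medskip

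\noindent The main verification is the first of the conditions \eqref{eq:IDcond}, namely the dimensional integral bound with $\gamma=\tfrac12$. The pointwise bound $|\tilde\Phi(v)|\leq 4\eta$ and the derivative bound $|\tilde\Phi'(v)|\leq 2\eta\,\|\chi'\|_{\infty}$ follow at once from $|\chi|\leq 2$ and the fact that the bumps have disjoint supports (they sit on the unit intervals $[v_*+i^2,v_*+i^2+1]$). For the integral bound, fix $0\leq u\leq v$; the bumps of $\tilde\Phi$ meeting $[u,v]$ are indexed by those integers $i$ with $i^2\in[u-v_*-1,v-v_*]$, which by the elementary inequality $\sqrt{b}-\sqrt{a}\leq\sqrt{b-a}$ ($0\leq a\leq b$) number at most $\sqrt{v-u+1}+1\lesssim\sqrt{v-u+1}$. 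Since each bump contributes at most $4\eta\int_0^1|\chi|\,\ud s\lesssim\eta$, I obtain
\[
	\int_u^v|\tilde\Phi(v')|\,\ud v'\lesssim \eta\,\sqrt{v-u+1},
\]
which, splitting the cases $v-u\leq 1$ and $v-u\geq 1$, gives $\int_u^v|\tilde\Phi|\leq C\eta(v-u)^{1/2}$ for an absolute constant $C$. Up to replacing $\eta$ by $C\eta$ (equivalently, shrinking $\eta_0$), this is exactly the hypothesis of Theorem~\ref{thm:main.finite} with $\gamma=\tfrac12$.

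\medskip

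\noindent Applying Theorem~\ref{thm:main.finite} to the data $\tilde\Phi$ yields a solution $(\tilde r,\tilde\phi,\tilde m)$ on the entire $\PD_{[0,\infty)}$ satisfying \eqref{eq:main:geo}--\eqref{eq:main:apriori} with $\gamma=\tfrac12$. Since $\tilde\Phi=\Phi$ on $[0,v_0]$ and the domain of dependence of $\{0\leq u\leq v\leq v_0\}\cap C_0$ lies within $[0,v_0]\subset C_0$, standard uniqueness for the characteristic IVP for \eqref{eq:SSESF} implies that the original solution coincides with $(\tilde r,\tilde\phi,\tilde m)$ on $\{0\leq u\leq v\leq v_0\}$. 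The a priori estimates therefore hold in this region, as claimed. The only nontrivial step was the counting argument giving the $\gamma=\tfrac12$ bound; all remaining ingredients are already contained in Theorem~\ref{thm:main.finite} and standard finite speed of propagation.
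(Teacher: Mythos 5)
Your proof is correct and is precisely the ``easy verification'' that the paper omits: truncate the data after $v_0$ (using \eqref{Phi.vanishing}), check \eqref{eq:IDcond} with $\gamma=\tfrac12$ via the $O(\sqrt{v-u})$ count of bumps at positions $v_*+i^2$, apply Theorem~\ref{thm:main.finite}, and transfer the estimates by domain of dependence. No discrepancy with the paper's intended argument.
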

\begin{proof}
This is an easy verification.
\end{proof}
\begin{remark}
Notice that we \underline{cannot} apply Theorem~\ref{thm:main.finite} for the whole initial data on $[0,\infty)$ since the data are large on $[v_0, v_0+\de]$. In fact, if we were to apply Theorem~\ref{thm:main.finite}, then we would have obtained a geodesically complete spacetime, which is not what we are aiming at.
\end{remark}

With the estimates in Theorem~\ref{thm:main.finite}, we in fact can derive more precise information about the solution, namely, that $\rd_v(r\phi)$ is essentially given by $\Phi$. Here, we need a slightly stronger estimate than in \cite{LOY}.
\begin{proposition}\label{imp.rdvphi}
In the set $\{(u,v): 0\leq u\leq v\leq v_0\}$, we have
$$r|\rd_v\phi|(u,v)\ls_{M_i} |\Phi|(v)+r_+^{-1}\left|\int_u^v \Phi(v')\, \ud v'\right|+\eta r_+^{-1}(u,v)\log (1+r_+(u,v)),$$
where $r_{+} = \max \set{1, r}$.
\end{proposition}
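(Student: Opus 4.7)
The plan is to derive an exact representation formula for $r\rd_v\phi(u,v)$ that isolates the Minkowski-limit contribution (matching the first two terms on the right-hand side of the Proposition) from the gravitational corrections, and then bound the corrections by $\eta\,r_+^{-1}\log(1+r_+)$ using the a priori estimates of Theorem~\ref{thm:main.finite}. The starting point is the pointwise identity
\begin{equation*}
\rd_u(r\rd_v\phi) = -\rd_v r\cdot\rd_u\phi,
\end{equation*}
obtained by combining $r\rd_v\phi = \rd_v(r\phi)-\rd_v r\cdot\phi$ with the wave equation for $r\phi$ in \eqref{eq:SSESF}. Integrating in $u$ from $u_0=0$ and inserting $\rd_v(r\phi)(0,v)=\f12\Phi(v)$, $\rd_v r(0,v)=\f12$, and the axis condition $(r\phi)(0,0)=0$ (whence $(r\phi)(0,v)=\f12\int_0^v\Phi$), yields
\begin{equation*}
r\rd_v\phi(u,v) = \f12\Phi(v) - \f{1}{2v}\int_0^v\Phi(v')\,\ud v' - \int_0^u \rd_v r(u',v)\,\rd_u\phi(u',v)\,\ud u'.
\end{equation*}

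Next I would dispose of the regime $r(u,v)\le 1$ immediately via Theorem~\ref{thm:main.finite}: here $r_+=1$ and $\log(1+r_+)\sim 1$, and $|r\rd_v\phi|\le|\rd_v(r\phi)|+|\rd_v r|\,|\phi|\lesssim|\Phi(v)|+\eta$ is already majorized by the claimed bound. In the main regime $R:=r(u,v)>1$ I decompose $\rd_v r=\f12+(\rd_v r-\f12)$ in the last integral; the $\f12\rd_u\phi$ piece telescopes to $\f12(\phi(u,v)-\phi(0,v))$. Combining this with $\phi(0,v)=\f1v\int_0^v\Phi$ (which cancels the second term in the previous display) and $\phi(u,v)=\f1{r(u,v)}\int_u^v\rd_v(r\phi)(u,v')\,\ud v'$ (which follows from $(r\phi)(u,u)=0$), together with the integrated form $\rd_v(r\phi)(u,v')=\f12\Phi(v')+\int_0^u\rd_u\rd_v r\cdot\phi\,\ud u''$, produces the representation
\begin{equation*}
r\rd_v\phi(u,v) = \f12\Phi(v) - \f{1}{4r(u,v)}\int_u^v\Phi(v')\,\ud v' + E_1 + E_2,
\end{equation*}
with gravitational corrections
\begin{equation*}
E_1 = -\f{1}{2r(u,v)}\int_u^v\!\!\int_0^u\rd_u\rd_v r\cdot\phi\,\ud u''\,\ud v',\qquad E_2 = \int_0^u\bigl(\tfrac12-\rd_v r(u',v)\bigr)\,\rd_u\phi(u',v)\,\ud u'.
\end{equation*}
The first two terms on the right-hand side already coincide (up to harmless constants) with the leading terms of the Proposition.

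The main obstacle is to prove $|E_1|+|E_2|\lesssim_{M_i}\eta\,r_+^{-1}\log(1+r_+)$ in the regime $R>1$. To this end I would invoke the quantitative bounds $|\rd_u\rd_v r|\lesssim_{M_i}\min\{r^{-1},M_i r^{-2}\}$ (consequence of \eqref{eq:SSESF}, $m\le r/4$, and the uniform mass bound $m\lesssim_{M_i}1$), $|\phi|\lesssim\eta\min\{1,r^{-1/2}\}$ and $|\rd_u(r\phi)|\lesssim\eta$ from Theorem~\ref{thm:main.finite} (the latter implying $|\rd_u\phi|\lesssim\eta/r$ for $r\ge 1$), and $|\rd_v r-\f12|\lesssim_{M_i}r^{-1}$ for $r\ge 1$ (obtained by integrating $\rd_u\rd_v r$ in $u$ along constant $v$, starting from $\rd_v r(0,v)=\f12$). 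For $E_2$, pointwise multiplication yields integrand $\lesssim_{M_i}\eta/r(u',v)^2$, which after the change of variables $u'\mapsto r(u',v)$ along the constant-$v$ curve (using $|\rd_u r|\in[\f16,\f23]$) gives $|E_2|\lesssim_{M_i}\eta/R$, easily absorbed into $\eta\,r_+^{-1}\log(1+r_+)$. For $E_1$ I would perform the $u''$-integral first at fixed $v'$: the integrand $|\rd_u\rd_v r\cdot\phi|$ ranges from $\sim M_i\eta/r^{5/2}$ (for large $r(u'',v')$) to $\sim\eta/r$ (for small $r(u'',v')$), so the inner integral is bounded by $M_i\eta/r(u,v')^{3/2}$ when $r(u,v')\ge 1$ and by $\eta\log(1/r(u,v'))$ when $r(u,v')<1$; splitting the outer $v'$-integration at $r(u,v')=1$ gives a total bound $\lesssim_{M_i}\eta$, whence $|E_1|\lesssim_{M_i}\eta/R$. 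The most delicate piece is the near-axis part of the $v'$-integration, where the Theorem's $\eta r^{-1/2}$ decay of $\phi$ is insufficient on its own; here one must use the uniform boundedness $|\phi|\lesssim\eta$ together with the $\lesssim 1$ length of the near-axis segment in $v'$ to close the estimate without a spurious logarithmic divergence.
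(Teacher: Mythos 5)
Your argument is correct, but it takes a genuinely different route from ours, so let me compare. We never derive an exact representation for $r\rd_v\phi$. Instead we integrate the $\rd_u\rd_v(r\phi)$ equation in $u$ and estimate the difference $\rd_v(r\phi)(u,v)-\tfrac12\Phi(v)$ in two ways: once using $m\leq m_i\ls M_i$ and $\int_0^u 2m|\rd_u r| r^{-2}\,\ud u'\leq 2m_i r^{-1}(u,v)$, giving a bound $\ls_{M_i}\eta r^{-1}$, and once using the exact identity $\int_0^u \rd_u\rd_v r\,\ud u'=\rd_v r(u,v)-\tfrac12$, whose absolute value is at most $\tfrac16$, giving a bound $\ls \eta$; these combine into $\ls_{M_i}\eta r_+^{-1}$, which we then integrate in $v$ to control $\phi$ (this $v$-integration is where our $\log(1+r_+)$ comes from), and finally we write $r\rd_v\phi=\rd_v(r\phi)-(\rd_v r)\phi$. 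Your route via $\rd_u(r\rd_v\phi)=-\rd_v r\,\rd_u\phi$ and the explicit errors $E_1,E_2$ is more computational (a genuine double integral for $E_1$, with a case split at $r=1$), but it buys a marginally sharper error, $\eta r_+^{-1}$ in place of $\eta r_+^{-1}\log(1+r_+)$, since your near-axis logarithm gets integrated away. Two small points of hygiene: the reason your $E_1$ estimate closes near the axis is the integrability of $\log(1/s)$ over the segment where $r(u,v')<1$ (whose $v'$-length is $O(1)$ because $\rd_v r\in(\tfrac13,\tfrac12]$, so $\ud v'\sim \ud r$), rather than the uniform bound $|\phi|\ls\eta$ per se --- with only $|\rd_u\rd_v r|\ls m/r^2\ls 1/r$ available near $\Gmm$, the inner $u''$-integral genuinely produces $\log(1/r(u,v'))$ and it is the outer integration that removes it; and all your pointwise inputs ($|\phi|\ls\eta\min\set{1,r^{-1/2}}$, $|\rd_u(r\phi)|\ls\eta$, $|\rd_v(r\phi)|\ls|\Phi|+\eta$, and $m\leq m_i\ls M_i$ via the monotonicity in \eqref{eq:SSESF} and Proposition~\ref{prop:initial.mass}) are legitimately available from the preceding proposition, which establishes Theorem~\ref{thm:main.finite} with $\gamma=\tfrac12$ on $\set{0\leq u\leq v\leq v_0}$, so there is no circularity.
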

\begin{proof}
We will also take $(u,v)\in \{(u,v): 0 \leq u\leq v\leq v_0\}$ in this proof. By \eqref{eq:SSESF}, we have the integral formula
\begin{align*}
	\rd_{v} (r \phi)(u, v) 
	= \rd_{v} (r \phi)(0, v)+ \int_{u_{0}}^{u} \frac{2 m \rd_u r \rd_{v} r}{(1-\f{2m}{r}) r^{2}} \phi (u', v) \, \ud u'.
\end{align*}
We will estimate $\rd_{v} (r \phi)(u, v) -\rd_{v} (r \phi)(0, v)$ in two ways. Using \eqref{eq:main:geo} (for $(1-\f{2m}{r})^{-1}$ and $\rd_{v} r$), \eqref{eq:main:apriori} (for $\phi$) and Proposition~\ref{prop:initial.mass} (for $m_i\ls M_i$),
\begin{equation}\label{imp.rdvrphi.1}
\left|\rd_{v} (r \phi)(u, v) -\rd_{v} (r \phi)(0, v) \right|\leq \f{2 m_i}{r(u,v)} \left(\sup_{u'\in [0, u]}|\phi|(u',v)\right)\ls_{M_i} \eta r^{-1}(u,v).
\end{equation}
On the other hand, by \eqref{eq:SSESF} and \eqref{eq:main:geo} (for $\rd_v r$), $\left|\int_{u_{0}}^{u} \frac{2 m \rd_u r \rd_{v} r}{(1-\f{2m}{r}) r^{2}} (u', v) \, \ud u'\right|= \left| \rd_{v} r(0, v) - \rd_{v} r (u, v) \right|\leq \frac{1}{6}$. This implies, using \eqref{eq:main:apriori},
\begin{equation}\label{imp.rdvrphi.2}
\left|\rd_{v} (r \phi)(u, v) -\rd_{v} (r \phi)(0, v) \right|\leq \frac{1}{6}\left(\sup_{u'\in [0, u]}|\phi|(u',v)\right)\ls \eta.
\end{equation}
Combining \eqref{imp.rdvrphi.1}, \eqref{imp.rdvrphi.2} and using $\rd_{v} (r \phi)(0, v)=\f{\Phi}{2}(v)$ yield
\begin{equation}\label{imp.rdvrphi.3}
\left|\rd_v(r\phi)(u,v)-\f{\Phi}{2}(v)\right|\ls_{M_i} \eta r_+^{-1}.
\end{equation}
Integrating this in $v$ and using \eqref{eq:main:geo} for $\rd_v r$, we have
\begin{equation*}
r|\phi|(u,v)\ls_{M_i} \left|\int_u^v \Phi(v')\, dv'\right|+ \eta\log (1+r_+(u,v)).
\end{equation*}
On the other hand $|\phi|(u,v)\ls \eta$ by \eqref{eq:main:apriori}. Hence,
\begin{equation}\label{imp.rdvrphi.4}
|\phi|(u,v)\ls_{M_i} r_+^{-1}\left|\int_u^v \Phi(v')\, \ud v'\right|+ \eta r_+^{-1}(u,v)\log (1+r_+(u,v)).
\end{equation}
Since $r\rd_v \phi=\rd_v(r\phi)-(\rd_v r)\phi$, the conclusion follows from $\rd_v r\leq \f 12$ (in \eqref{eq:main:geo}), \eqref{imp.rdvrphi.3} and \eqref{imp.rdvrphi.4}.
\end{proof}

As a consequence, we can estimate the mass on \emph{part}\footnote{Methods in Proposition~\ref{prop:initial.mass} show that $m(0, v_0)\approx M_i$. Hence this estimate does not hold in general on the whole hypersurface.} of the hypersurface $\{v=v_0\}$:
\begin{proposition}\label{small.mass.first.region}
In the set $\left\{(u,v): u_0' \leq u\leq v\leq v_0\right\}$, we have 
$$m(u,v)\ls_{M_i} \eta^2.$$
In particular, this holds on $\left\{(u,v_0): u_0' \leq u\leq v_0\right\}$.
\end{proposition}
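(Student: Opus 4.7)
The idea is to integrate the Raychaudhuri equation along the outgoing ray $\{v=\mathrm{const}\}$---more precisely, use the mass formula along a constant-$u$ ray starting from the axis, where $m=0$---and exploit Proposition~\ref{imp.rdvphi} together with the fact that $\Phi$ vanishes identically on $[u_0', v_0]$ (cf.~\eqref{Phi.vanishing}) to show that the mass flux along any constant-$u$ ray with $u \geq u_0'$ is $O_{M_i}(\eta^2)$.

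\textbf{Step 1: Mass formula.} By the $\rd_v$-Raychaudhuri equation from \eqref{eq:SSESF} and the boundary condition $m(u,u)=0$, the same integrating-factor computation that produced \eqref{m.data.formula} yields, for any $(u,v)$ with $u_0' \leq u \leq v \leq v_0$,
\[
	m(u,v) = \tfrac{1}{2} e^{-\int_u^v r \frac{(\rd_v\phi)^2}{\rd_v r}(u,v'')\,\ud v''} \int_u^v r^2 \frac{(\rd_v\phi)^2}{\rd_v r}(u,v')\, e^{\int_u^{v'} r \frac{(\rd_v\phi)^2}{\rd_v r}(u,v'')\,\ud v''} \,\ud v'.
\]
By \eqref{eq:main:geo}, $\rd_v r \in (\tfrac{1}{3},\tfrac{1}{2}]$, so it suffices to bound $\int_u^v r^2 (\rd_v\phi)^2(u,v')\,\ud v'$ by $\eta^2$ (up to an $M_i$-dependent constant); the exponential factors will then be controlled exactly as in \eqref{exp.est}, and in fact will contribute at most a bounded multiplicative constant.

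\textbf{Step 2: Using vanishing of $\Phi$.} This is the key step. For $u_0' \leq u \leq v' \leq v_0$, the interval $[u, v']$ is contained in $[u_0', v_0]$, where $\Phi$ vanishes by \eqref{Phi.vanishing}. Hence both $\Phi(v')=0$ and $\int_u^{v'} \Phi(v'')\,\ud v'' = 0$, so Proposition~\ref{imp.rdvphi} reduces in this region to
\[
	r|\rd_v\phi|(u,v') \ls_{M_i} \eta\, r_+^{-1}(u,v')\log(1+r_+(u,v')),
\]
and therefore $r^2(\rd_v\phi)^2(u,v') \ls_{M_i} \eta^2 (1+r(u,v'))^{-2}\log^2(1+r(u,v'))$.

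\textbf{Step 3: Integration.} Changing variables from $v'$ to $r$ via $\ud v' = (\rd_v r)^{-1}\ud r \sim \ud r$ (using \eqref{eq:main:geo}), and using that the integrand is bounded on $r\in[0,1]$ and decays like $r^{-2}\log^2 r$ as $r\to\infty$,
\[
	\int_u^v r^2(\rd_v\phi)^2(u,v')\,\ud v' \ls_{M_i} \eta^2 \int_0^\infty (1+r)^{-2}\log^2(1+r)\,\ud r \ls \eta^2.
\]
Combined with Step~1 (where the exponential factors are bounded by $\exp(O_{M_i}(\eta^2))\lesssim 1$), this yields $m(u,v) \ls_{M_i} \eta^2$, which is the claim. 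There is no real obstacle here beyond correctly matching the hypotheses of Proposition~\ref{imp.rdvphi}; the critical input is that both $\Phi$ and its integral vanish in the relevant range, so the only surviving contribution to $r|\rd_v\phi|$ is the decaying $\log$-enhanced $r_+^{-1}$ tail, whose square is integrable.
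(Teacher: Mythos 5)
Your proof is correct and follows essentially the same route as the paper: the paper likewise invokes \eqref{Phi.vanishing} to kill the $|\Phi|(v)$ and $\int_u^v\Phi$ terms in Proposition~\ref{imp.rdvphi}, then applies the analogues of \eqref{m.data.formula} and \eqref{exp.est} based at $u$ instead of $0$ and integrates $\eta^2 r_+^{-2}\log^2(1+r_+)$ in $r$. Your write-up merely spells out the same steps in more detail.
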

\begin{proof}
By \eqref{Phi.vanishing}, for any $(u,v)$ in the given set, $\Phi(v)=0$ and $\int_u^v \Phi(v')\, \ud v'=0$. Hence, using appropriate variants of \eqref{m.data.formula} and \eqref{exp.est}, the estimate \eqref{eq:main:geo} (for $\rd_v r$) and Proposition~\ref{imp.rdvphi}, we have
$$m(u,v)\ls \eta^2 \int_u^v r_+^{-2}(u,v)\log^2 (1+r_+(u,v)) \ud v' \ls \eta^2 \int_0^\infty r_+^{-2}\log^2 (1+r_+) \ud r\ls \eta^2.$$
\end{proof}

\subsection{The short pulse region}

In this section, we recall Christodoulou's short pulse method \cite{Chr} for constructing spacetimes exhibiting trapped surface formation, appropriately adapted to our (much simpler) setting in spherical symmetry.

\begin{proposition}[Short pulse method]\label{prop:shortpulse}
Let 
\begin{equation}\label{u1.def}
u_1=v_0-\f{M_f}{64}.
\end{equation}

Then, for $v_0$ sufficiently large and $\de>0$ sufficiently small (depending on $v_0$),
\begin{itemize}
\item The $2$-sphere given by $(u,v)=(u_1,v_0+\de)$ is trapped, i.e., $(\rd_u r)(u_1, v_0+\de)<0$, $(\rd_v r)(u_1, v_0+\de)<0$.
\item The following estimates hold in the set $\{(u,v):u\in [0, u_1], \, v\in [v_0, v_0+\de]\}$:
\begin{align}
\label{r.est}
\f {M_f}{448} \leq \f 17 (v_0-u) &\leq r(u,v)\leq (v_0-u),\\
\label{duphi.shortpulse}
\left|r\rd_u\phi(u,v)- r\rd_u\phi(u,v_0)\right|&\ls_{M_f, \,M_f^{-1}} \de^{\f 12},\\
\label{dvphi.shortpulse}
\left|r\rd_v\phi(u,v)-r\rd_v\phi(0,v)\right|&\ls_{M_f, \, M_f^{-1}} v_0^2\left(\eta+\de^{\f 12}\right),\\
\label{phi.shortpulse}
|\phi(u,v)|&\ls \eta r_+^{-\f 12}+O_{M_f,\, M_f^{-1}}(\de^{\f 12}),\\
\label{dvrOmg.bound}
\left|\f{\rd_v r}{\Omg^2}(u,v)\right|&\ls 1.
\end{align}
\item If $u\in [0, u_0']$, we have, in addition, that
\begin{equation}\label{dvr.imp.bound}
\left|(\rd_v r)(u,v)-\f 12\right|\ls_{M_i} v_0^{-1}. 
\end{equation}
\end{itemize}
\end{proposition}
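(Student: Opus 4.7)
The plan is to implement Christodoulou's short pulse method~\cite{Chr} in this spherically symmetric setting as a bootstrap on the short null rectangle $R = \{(u,v) : u \in [0, u_1], \, v \in [v_0, v_0+\de]\}$, exploiting that $R$ has width only $\de$ in the $v$-direction. The data on $\uC_{v_0} \cap R$ are furnished by the previous subsection: for $u \in [0, u_0']$, Theorem~\ref{thm:main.finite} gives $\f 13 < \rd_v r \leq \f 12$, $\rd_u r \in (-\f 23, -\f 16)$, and $\f{2m}{r} < \f 12$, refined by Proposition~\ref{imp.rdvphi} since $\Phi \equiv 0$ on $[u_0', v_0]$; for $u \in [u_0', u_1]$, Proposition~\ref{small.mass.first.region} gives $m \ls_{M_i} \eta^2$. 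Combined with $r(0, v_0) = v_0/2$, these yield $r(u, v_0) \in [(v_0-u)/3, (v_0-u)/2]$. The data on $C_0 \cap R$ is the prescribed short pulse of amplitude $M_f^{\f 12}\de^{-\f 12}$.

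On $R$ I would run a bootstrap with a constant $K = K(M_i, M_f, M_f^{-1}, v_0)$ that is independent of $\de$, carrying the target bounds \eqref{r.est}--\eqref{dvrOmg.bound}, plus $m \leq K$ and $|\rd_u r + \f 12| \leq K\de^{\f 12}$. The monotonicity $\rd_v(\rd_v r/\Omg^2) = -r(\rd_v\phi)^2/\Omg^2 \leq 0$ from \eqref{eqn.Ray}, applied to the $O(1)$ bound at $v = v_0$, gives \eqref{dvrOmg.bound} immediately. Estimate \eqref{dvphi.shortpulse} follows from integrating the identity $\rd_u\rd_v(r\phi) = 2m\,\rd_u r\,\rd_v r\,\phi/((1-\f{2m}{r})r^2)$ from \eqref{eq:SSESF} in $u$ from $u=0$ at fixed $v$, using the bootstrap to bound the integrand, and comparing to $\rd_v(r\phi)(0, v) = \f 12 \Phi(v)$; the $v_0^2$ loss reflects the $1/r^2 \sim 1/(v_0-u)^2$ integration over a large $u$-range. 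A dual integration of the same identity in $v$ from $v_0$ at fixed $u$, with Cauchy--Schwarz extracting $\de^{\f 12}$ from $\|\rd_v \phi\|_{L^2_v([v_0, v_0+\de])} \ls M_f^{\f 12}/r$, yields \eqref{duphi.shortpulse}; integrating the resulting bound on $\rd_v(r\phi)$ in $v$ then gives \eqref{phi.shortpulse}. The bounds \eqref{r.est} and on $\rd_u r$ follow because all geometric quantities change by at most $O(\de)$ across the $v$-interval, once $\de$ is taken small. Finally, \eqref{dvr.imp.bound} for $u \in [0, u_0']$ comes from integrating the $\rd_v r$ equation across $v \in [0, v_0]$, using $\Phi \equiv 0$ on $[u_0', v_0]$ and the $O(\eta^2)$ mass bound from Proposition~\ref{small.mass.first.region}.

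The trapped surface conclusion -- the main point -- then follows by integrating the $v$-Raychaudhuri equation from $v_0$ to $v_0+\de$ at $u = u_1$:
\[
\f{\rd_v r}{\Omg^2}(u_1, v_0+\de) = \f{\rd_v r}{\Omg^2}(u_1, v_0) - \int_{v_0}^{v_0+\de} \f{(r\rd_v\phi)^2}{r\,\Omg^2}(u_1, v')\, \ud v'.
\]
By \eqref{dvphi.shortpulse} combined with Lemma~\ref{lem:data} and the form of $\Phi$ on $[v_0, v_0+\de]$, $(r\rd_v\phi)(u_1, v) = M_f^{\f 12}\de^{-\f 12}\chi(\f{v-v_0}{\de}) + O(\text{small})$, while \eqref{r.est} gives $r(u_1, v) \leq v_0 - u_1 = M_f/64$. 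Hence the integral is bounded below by $\f{64}{M_f}\cdot M_f \int_0^1 \chi^2(s)\, \ud s = 64$ modulo small corrections, dominating the $O(1)$ initial value and forcing $\rd_v r(u_1, v_0+\de) < 0$. The sign $\rd_u r(u_1, v_0+\de) < 0$ is built into the bootstrap.

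The main obstacle is the hierarchy of smallness parameters: the specific choice $u_1 = v_0 - M_f/64$ is calibrated so that $r(u_1, v) \leq M_f/64$ in $R$ makes the Raychaudhuri integrand sufficiently large to dominate the $O(1)$ initial datum; $v_0$ must then be taken large (depending on $M_i, M_f$) to absorb the $v_0^{-1}$-type errors inherited from the region before the pulse, and $\de$ small (depending on all of $M_i$, $M_f$, $v_0$) to absorb every lower-order error in \eqref{dvphi.shortpulse}--\eqref{phi.shortpulse} when closing the bootstrap. Once this ordering is respected, all the estimates close.
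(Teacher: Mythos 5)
Your overall strategy is the paper's: a bootstrap on the thin rectangle $\{(u,v): u\in[0,u_1],\ v\in[v_0,v_0+\de]\}$ with data on $\uC_{v_0}$ supplied by Theorem~\ref{thm:main.finite}, estimates closed by taking $\de$ small last, and the trapped surface produced by integrating the $v$-Raychaudhuri equation at $u=u_1$, where the calibration $u_1=v_0-\f{M_f}{64}$ forces $r^{-1}\geq 64/M_f$ so that the pulse energy $\int r^2(\rd_v\phi)^2\approx M_f$ overwhelms the $O(1)$ value of $\rd_v r/\Omg^2$ at $v=v_0$. That mechanism, and the ordering of parameters, match the paper.

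There is, however, one step that fails as written. You propose to prove \eqref{duphi.shortpulse}--\eqref{phi.shortpulse} by integrating $\rd_u\rd_v(r\phi)=\frac{2m\,\rd_u r\,\rd_v r}{(1-\frac{2m}{r})r^2}\phi$ and ``using the bootstrap to bound the integrand,'' where your bootstrap controls $m$, $\rd_u r$, $\rd_v r$ separately. But the apparent horizon $\{1-\frac{2m}{r}=0\}$ necessarily enters the rectangle (at $(u_1,v_0)$ one has $\rd_v r>0$, hence $1-\frac{2m}{r}>0$, while at $(u_1,v_0+\de)$ the sphere is trapped, hence $1-\frac{2m}{r}<0$), so no lower bound on $|1-\frac{2m}{r}|$ is available and the factor $(1-\frac{2m}{r})^{-1}$ cannot be estimated term by term. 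The repair is to avoid this quotient altogether: either note $\frac{\rd_u r\,\rd_v r}{1-\frac{2m}{r}}=-\frac{\Omg^2}{4}$ and bootstrap $\Omg^2$, or (as the paper does) work with the $(r,\phi,\Omg)$ system \eqref{WW.SS}, integrating $\rd_v(r\rd_u\phi)=-(\rd_u r)(\rd_v\phi)$ in $v$ and $\rd_u(r\rd_v\phi)=-(\rd_v r)(\rd_u\phi)$ in $u$, together with $\rd_u(\rd_v r^2)=\rd_v(\rd_u r^2)=-\frac{\Omg^2}{2}$ for the geometry; these never see $1-\frac{2m}{r}$. Two smaller points: your ansatz $|\rd_u r+\f12|\leq K\de^{1/2}$ is false on the data $\{v=v_0\}$, where one only knows $\rd_u r\in(-\f23,-\f16)$ (what is true, and what the paper proves, is two-sided boundedness of $-\rd_u r$); and \eqref{dvr.imp.bound} follows not from the $O(\eta^2)$ mass bound of Proposition~\ref{small.mass.first.region} (which holds for $u\geq u_0'$, not $u\leq u_0'$) but from integrating $\rd_u\rd_v r$ in $u$ over $[0,u_0']$ and using $m\ls M_i$ together with $r\gtrsim v_0$ there.
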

\begin{proof}
\pfstep{Step~1: Estimates on (initial) $\{u=0\}$ hypersurface} Since $\phi(0,v_0)=0$ (Lemma~\ref{lem:data}), we have $|\phi|(0,v)\leq r^{-1}(0,v) \cdot \de \cdot (M_f^{\f 12} \de^{-\f 12})\ls v_0^{-1} \de^{\f 12} M_f^{\f 12}$ for $v\in [v_0, v_0+\de]$. Since on $\{u=0\}$, $\rd_v r=\f 12$ and $2\rd_v(r\phi)(0,v)=\Phi(v)$, we have, for $v\in [v_0, v_0+\de]$,
$$r\rd_v\phi(0,v)=\rd_v(r\phi)(0,v)-\f 12 \phi(0,v)=\f{\Phi(v)}{2}+O(v_0^{-1} \de^{\f 12} M_f^{\f 12}).$$
The Raychaudhuri equation \eqref{eqn.Ray}, $\rd_v r(0, v)=\f 12$ and $r (0, v) = \frac{1}{2} v \geq \frac{1}{2} v_{0}$ imply, for $v\in [v_0, v_0+\de]$,
%$$\rd_v\f{1}{2\Omg^2}=-\f{r(\rd_v\phi)^2}{\Omg^2}$$
%$$-\rd_v\log(2\Omg^2)=2\Omg^2\rd_v\f{1}{2\Omg^2}=-2r(\rd_v\phi)^2.$$
\begin{equation}\label{dvlogOmg.data.shortpulse}
\begin{split}
|\rd_v\log\Omg^2|(0,v)=&2r(\rd_v\phi)^2(0,v)\leq v_0^{-1} |\Phi(v)|^2+O(v_0^{-2}|\Phi(v)|\de^{\f 12} M_f^{\f 12}+v_0^{-3} \de M_f)\\
\leq & 4 v_0^{-1} M_f \de^{-1} + O_{M_f}(v_0^{-2}).
\end{split}
\end{equation}

\pfstep{Step~2: Bootstrap argument} Assume as bootstrap assumptions that for every $(u,v)\in [0, u_1]\times [v_0, v_0+\de]$,
\begin{equation}\label{BA:shortpulse}
\f{1}{18}\leq \Omg^2(u,v) \leq \f{16}{3},\quad r|\rd_v\phi|(u,v)\leq 4M_f^{\f 12}\de^{-\f 12}.
\end{equation}
\textbf{Unless otherwise specified, we will take $(u,v)\in [0, u_1]\times [v_0, v_0+\de]$ in the rest of the proof.}

\pfstep{Step~2(a): Estimates for $\rd_v r$ and $r$ (Proof of \eqref{r.est})} Since by \eqref{WW.SS} $\rd_u(\rd_v r^2)=-\f{\Omg^2}{2}$, by \eqref{BA:shortpulse}, $|\rd_u(\rd_v r^2)|(u,v)\leq \f{8}{3}$. Moreover, since initially $\rd_v r^2(0,v)=2(r \rd_v r)(0,v)=\f{v}{2}$, we have 
\begin{equation}\label{rdvr2}
\left|(\rd_v r^2)(u,v)- \f{v}{2}\right| \leq \f{8 u_1}{3}.
\end{equation}
This implies $\left|r^2(u,v)-r^2(u,v_0)\right|\leq \f{\de v}{2}+\f{8 \de u_1}{3}$. Together with the bound \eqref{eq:main:geo} for $\rd_{v} r$ in $\set{(u, v) : 0 \leq u \leq v \leq v_{0}}$, we have
$$\f 1{9} (v_0-u)^2-\f{\de v_0}{2}-\f{8 \de u_1}{3} \leq r^2(u,v)\leq \f 1 4 (v_0-u)^2+\f{\de v_0}{2}+\f{8 \de u_1}{3}.$$
Choosing $\de$ sufficiently small depending on $v_0$ (note also that $u_1 < v_0$), we obtain \eqref{r.est} (where the left-most inequality in \eqref{r.est} is achieved using \eqref{u1.def}).
%% IF WE REVERT $\Red{0}$ TO $u_0$, THEN USE:
%\begin{equation}\label{rdvr2}
%\left|(\rd_v r^2)(u,v)- \f{v}{2}\right| \leq \f{4 (u_1 - u_0)}{3}.
%\end{equation}
%This implies $\left|r^2(u,v)-r^2(u,v_0)\right|\leq \f{\de v}{2}+\f{4\de (u_1-u_0)}{3}$. Together with the bound \eqref{eq:main:geo} for $\rd_u r$ on $\{v=v_0\}$, we have
%$$\f 1{36}(v_0-u)^2-\f{\de v_0}{2}-\f{4\de (u_1-u_0)}{3} \leq r^2(u,v)\leq \f 49 (v_0-u)^2+\f{\de v_0}{2}+\f{4\de (u_1-u_0)}{3}.$$
%Choosing $\de$ sufficiently small depending on $v_0$ (note also that $u_1-u_0< v_0$), (and using \eqref{u1.def} for the left-most inequality), we obtain \eqref{r.est}.

Substituting \eqref{r.est} into \eqref{rdvr2}, using $v,\, u_1 \ls v_0$ and $(v_0-u)\gtrsim M_f$ (which follows from \eqref{u1.def}), yields%% IF WE REVERT $\Red{0}$ TO $u_0$, THEN USE:
%Substituting \eqref{r.est} into \eqref{rdvr2}, using $v,\, u_1 - u_0 \ls v_0$ and $(v_0-u)\gtrsim M_f$ (which follows from \eqref{u1.def}), yields
\footnote{Note that $\rd_v r$ can be negative: it is indeed negative in part of the region as we show in Step~3.}
\begin{equation}\label{dvr.shortpulse}
|\rd_v r|(u,v) \ls \f{v_0}{M_f}.
\end{equation}
\pfstep{Step~2(b): Estimates for $\rd_u r$} Using $\rd_v(\rd_u r^2)=-\f{\Omg^2}{2}$, \eqref{eq:main:geo} (for $\rd_u r$ when $v=v_0$), \eqref{r.est} and \eqref{BA:shortpulse},
$$\f{1}{42}(v_0-u) -O(\de) \leq (-r\rd_u r)(u,v)\leq \f 23(v_0-u)+O(\de).$$
Using \eqref{r.est} again, and taking $\de$ sufficiently small depending on $v_0$, $u_1$, we have
%% IF WE REVERT $\Red{0}$ TO $u_0$, THEN USE:
%Using \eqref{r.est} again, and taking $\de$ sufficiently small depending on $v_0$, $u_0$, $u_1$, we have
\begin{equation}\label{dur.shortpulse}
\f 1{84}\leq (-\rd_u r)(u,v)\leq  \f{28}{3}.
\end{equation}

\pfstep{Step~2(c): Estimates for $r\rd_u\phi$ (Proof of \eqref{duphi.shortpulse})} Using $\rd_v (r\rd_u\phi)=-(\rd_u r)(\rd_v\phi)$, \eqref{r.est}, \eqref{BA:shortpulse} and \eqref{dur.shortpulse}, we obtain \eqref{duphi.shortpulse}.

In particular, by \eqref{eq:main:geo} and \eqref{eq:main:apriori} (for $r\rd_u\phi$ on $\{v=v_0\}$) and \eqref{r.est}, we have
\begin{equation}\label{duphi.2.shortpulse}
|\rd_u\phi|(u,v)\ls_{M_f, \,M_f^{-1}} \left(\eta+\de^{\f 12}\right).
\end{equation}

\pfstep{Step~2(d): Estimates for $r\rd_v\phi$ (Proof of \eqref{dvphi.shortpulse})} Using $\rd_u (r\rd_v\phi)=-(\rd_v r)(\rd_u\phi)$, \eqref{dvr.shortpulse} and \eqref{duphi.2.shortpulse}, we thus have \eqref{dvphi.shortpulse}.

In particular, by Lemma~\ref{lem:data}, we have the upper bound
\begin{equation}\label{dvphi.upper.bound}
r|\rd_v\phi|(u,v)\leq 3M_f^{\f 12}\de^{-\f 12},
\end{equation}
as well as the integrated lower bound
\begin{equation}\label{dvphi.lower.bound}
\int_{v_0}^{v_0+\de} r^2(\rd_v\phi)^2(u,v')\, \ud v'\geq \f 14 M_f,
\end{equation}
provided that $\dlt$ is sufficiently small (depending on $M_{f}$, $M_{f}^{-1}$ and $v_{0}$).

\pfstep{Step~2(e): Estimates for $\rd_v\log\Omg$ and $\Omg^2$} 
To control $\rd_v\log\Omg$, we first bound the $u$-integral of each term on the right hand side of the last equation in \eqref{WW.SS}. By \eqref{BA:shortpulse} (for $r |\rd_v\phi|$), \eqref{r.est} (for $r^{-1}$), \eqref{duphi.2.shortpulse} (for $|\rd_u\phi|$),
$$\int_{0}^{u_1} |\rd_u\phi||\rd_v\phi|(u,v) \,\ud u \ls_{M_f, \,M_f^{-1}} v_0 \de^{-\f 12} (\eta+\de^{\f 12})\ls_{M_f,\, M_f^{-1},\, v_0} \de^{-\f 12}.$$
Choosing $\de$ sufficiently small (depending on $M_f$, $M_f^{-1}$ and $v_0$), by \eqref{BA:shortpulse} (for $\Omg^2$), \eqref{r.est} (for $r^{-1}$), \eqref{dvr.shortpulse} (for $\rd_v r$) and \eqref{dur.shortpulse} (for $\rd_u r$), the remaining terms on the right hand side of \eqref{WW.SS} after integrating on $[0,u_1]$ are $O_{M_f,\, M_f^{-1},\, v_0} (\de^{-\f 12})$.
Combining the above estimates with \eqref{dvlogOmg.data.shortpulse} for the data, we have, for $\de$ sufficiently small, that
\begin{equation}\label{dvlogOmg}
|\rd_v\log\Omg|\leq 4 v_0^{-1}M_f \de^{-1}+ O_{M_f, \, M_f^{-1},\, v_0}(\de^{-\f 12}).
\end{equation}
By \eqref{eq:main:geo}, on $\{v=v_0\}$,
$$\f 29=4\cdot \f 13 \cdot \f 16<\Omg^2(u,v_0)=-\f{4\rd_v r \rd_u r}{1-\f{2m}{r}} <4\cdot\f 12\cdot \f 23\cdot \f{1}{2}=\f 23.$$
Using this and integrating \eqref{dvlogOmg}, we obtain
\begin{equation}\label{Omg.final.shortpulse}
\f 29(1-O(v_0^{-1}M_f)-O_{M_f, \, M_f^{-1},\, v_0} (\de^{\f 12})) \leq \Omg^2\leq \f 23(1+O(v_0^{-1}M_f)+O_{M_f, \, M_f^{-1},\, v_0} (\de^{\f 12})).
\end{equation}

\pfstep{Step~2(f): Completion of the bootstrap argument} 
By \eqref{dvphi.upper.bound} and \eqref{Omg.final.shortpulse}, we have improved the bootstrap assumptions \eqref{BA:shortpulse} as long as $v_{0}$ is large enough (depending on $M_{f}$) and $\de$ is sufficiently small.

\pfstep{Step~3: Improving the estimates (Proof of \eqref{phi.shortpulse}, \eqref{dvrOmg.bound} and \eqref{dvr.imp.bound})} First, using \eqref{eq:main:apriori} for $\phi$ and the estimate \eqref{dvphi.shortpulse}, we obtain \eqref{phi.shortpulse}. Next, integrating the first equation in \eqref{eqn.Ray}, using \eqref{eq:main:geo} (for $\f{\rd_v r}{\Omg^2}$ on $\{v=v_0\}$) and \eqref{dvphi.upper.bound}, we have
$$\left|\f{\rd_v r}{\Omg^2}\right|(u,v)\ls 1+\int_{v_0}^{v_0+\de} \f{1}{(v'-u)} M_f \de^{-1}\, \ud v'\ls 1,$$
which gives \eqref{dvrOmg.bound}. Finally, by \eqref{eq:SSESF}, for $u\in [0, u_0']$, 
$$\left|\log \f{\rd_v r(u,v)}{1/2}\right|=\int_{0}^u \f{2m(-\rd_u r)}{(1-\f{2m}{r})r^2}\, \ud u'
\ls_{M_i} 
\frac{1}{r(u_{0}', v)}
\ls
v_0^{-1}.$$
Exponentiating yields \eqref{dvr.imp.bound}.

\pfstep{Step~4: Formation of trapped surface} We now prove the remaining assertion regarding the formation of a trapped surface. First, we note that by \eqref{dur.shortpulse}, $(\rd_u r)(u_1, v_0+\de)<0$. It remains to show the negativity of $(\rd_v r)(u_1, v_0+\de)$.

By \eqref{eqn.Ray}, \eqref{eq:main:geo} (for $\f{\rd_v r}{\Omg^2}=\f{(1-\f{2m}{r})}{4(-\rd_u r)}$ on $\{v=v_0\}$), \eqref{r.est} (for $r^{-1}$), \eqref{BA:shortpulse} (for $\Omg^{-2}$) and \eqref{dvphi.lower.bound}, we have
\begin{equation*}
\begin{split}
\f{\rd_v r}{\Omg^2}(u_1,v_0+\de)=&\f{\rd_v r}{\Omg^2}(u_1,v_0)-\int_{v_0}^{v_0+\de} \f{r}{\Omg^2}(\rd_v\phi)^2(u_1,v')\, \ud v'\\
\leq & \frac{1 - \frac{2m}{r}}{4 (-\rd_{u} r)}(u_{1}, v_{0}) -\left(\inf_{v' \in [v_{0}, v_{0}+\dlt]} r^{-1} \Omg^{-2}(u_{1}, v')\right) \int_{v_{0}}^{v_{0}+\dlt} r^{2} (\rd_{v} \phi)^{2}(u_{1}, v') \, \ud v' \\
< &\f 32 - \f{3}{64}\f{1}{v_0+\de-u_1} M_f .
\end{split}
\end{equation*}
%% JUSTIFICATION FOR $\frac{3}{64}$
%Indeed, note that
%\begin{equation*}
%	\inf_{v' \in [v_{0}, v_{0}+\dlt]} r^{-1} \Omg^{-2}(u_{1}, v') \int_{v_{0}}^{v_{0}+\dlt} r^{2} (\rd_{v} \phi)^{2}(u_{1}, v') \, \ud v'
%	\geq \frac{1}{v_{0} + \dlt - u_{1}} \frac{3}{16} \frac{1}{4} M_{f}.
%\end{equation*}

Since $\Omg^2>0$, and $u_1=v_0-\f{M_f}{64}$ (and hence $64(v_0+\de-u_1)<2M_f$ for sufficiently small $\de$), this yields $(\rd_v r)(u_1, v_0+\de)<0$. Hence, the sphere given by $(u,v)=(u_1, v_0+\de)$ is trapped.
\end{proof}

We now apply Proposition~\ref{prop:shortpulse} to compute the mass on the constant $v=v_0+\de$ hypersurface when $u\in \left[u_0', u_1\right]$:

\begin{proposition}\label{mass.shortpulse}
In the set $\left\{(u,v): u\in [u_0', u_1],\, v=v_0+\de \right\}$,
$$\left| m(u,v) -M_f\right|\ls_{M_i,\, M_f,\, M_f^{-1}} \left(v_0^{-1}+\eta^2\right). $$
\end{proposition}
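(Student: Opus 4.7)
The plan is to compute $m(u, v_{0}+\de)$ at fixed $u \in [u_{0}', u_{1}]$ by integrating a reformulated mass equation from $v_{0}$ to $v_{0}+\de$. Starting from the Hawking-mass relation $m = \f{r}{2} + \f{2r\rd_{u}r\rd_{v}r}{\Omg^{2}}$ (a rearrangement of \eqref{def:mass}) and differentiating in $v$, substituting $\rd_{u}\rd_{v}r = -\Omg^{2}/(4r) - \rd_{u}r\rd_{v}r/r$ from \eqref{WW.SS} and $\rd_{v}(\rd_{v}r/\Omg^{2}) = -r(\rd_{v}\phi)^{2}/\Omg^{2}$ from \eqref{eqn.Ray}, one obtains
\[
\rd_{v} m \,=\, -\f{2r^{2}\rd_{u}r}{\Omg^{2}}(\rd_{v}\phi)^{2} \,=\, -\f{2\rd_{u}r}{\Omg^{2}}(r\rd_{v}\phi)^{2} \,\geq\, 0,
\]
where the non-negativity uses $\rd_{u}r < 0$ from \eqref{dur.shortpulse}. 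Unlike the original form $\rd_{v}r\cdot \rd_{v}m = \f{1}{2}(1-\f{2m}{r})r^{2}(\rd_{v}\phi)^{2}$, this identity remains well-defined where $\rd_{v}r$ vanishes, as it does in the interior of our region due to trapped surface formation. Combining with the bound $m(u, v_{0}) \ls_{M_{i}} \eta^{2}$ from Proposition~\ref{small.mass.first.region}, the goal reduces to showing
\[
\int_{v_{0}}^{v_{0}+\de}\Big(-\f{2\rd_{u}r}{\Omg^{2}}\Big)(u,v)\,(r\rd_{v}\phi)^{2}(u,v)\,\ud v \,=\, M_{f} + O_{M_{i},M_{f},M_{f}^{-1}}(v_{0}^{-1} + \eta^{2}).
\]

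For the factor $(r\rd_{v}\phi)^{2}$: combining \eqref{dvphi.shortpulse} with the initial-data expansion $r\rd_{v}\phi(0,v) = \Phi(v)/2 + O(v_{0}^{-1}\de^{1/2}M_{f}^{1/2})$ from Step~1 of the proof of Proposition~\ref{prop:shortpulse}, and using $\int_{v_{0}}^{v_{0}+\de}|\Phi|\ls (M_{f}\de)^{1/2}$ to control the cross term, one obtains $\int_{v_{0}}^{v_{0}+\de}(r\rd_{v}\phi)^{2}(u,v)\,\ud v = M_{f} + E_{1}(u)$ with $E_{1}(u)$ absorbed by the final choice of $\de$. The main technical step is to establish that $-2\rd_{u}r/\Omg^{2}(u,v) = 1 + O_{M_{i},M_{f},M_{f}^{-1}}(v_{0}^{-1}+\eta^{2})$ uniformly on $[u_{0}',u_{1}]\times[v_{0},v_{0}+\de]$; given this, multiplying and integrating gives the claim.

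To establish the pointwise estimate, I would first work on $\{v=v_{0}\}$: from the identity $-\rd_{u}r/\Omg^{2} = (1-\f{2m}{r})/(4\rd_{v}r)$ (another rearrangement of \eqref{def:mass}), Proposition~\ref{small.mass.first.region}, and $r(u,v_{0}) \ageq v_{0}-u \ageq M_{f}$ for $u \in [u_{0}',u_{1}]$, the task reduces to showing $\rd_{v}r(u,v_{0}) = \f{1}{2} + O(v_{0}^{-1}+\eta^{2})$. For this, integrate the identity $\rd_{u}\log\rd_{v}r = \f{2m\rd_{u}r}{(1-\f{2m}{r})r^{2}}$ (from the first equation in \eqref{eq:SSESF}) from $u'=0$ (where $\rd_{v}r = 1/2$ by \eqref{eq:ini-dvr}), splitting at $u_{0}'$: on $[0,u_{0}']$ use the crude $m \ls M_{i}$ and $r \ageq v_{0}-u' \ageq v_{0}$ to get $O_{M_{i}}(v_{0}^{-1})$, while on $[u_{0}',u]$ use Proposition~\ref{small.mass.first.region} and $r \ageq M_{f}$ to get $O_{M_{i},M_{f}^{-1}}(\eta^{2})$. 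To propagate from $\{v=v_{0}\}$ to $\{v=v_{0}+\de\}$, use the identity
\[
\rd_{v}\Big(-\f{\rd_{u}r}{\Omg^{2}}\Big) \,=\, \f{m}{2r^{2}} + \f{2\rd_{u}r\,\rd_{v}\log\Omg}{\Omg^{2}}
\]
obtained by direct differentiation together with the algebraic consequence $\rd_{u}\rd_{v}r/\Omg^{2} = -m/(2r^{2})$ of \eqref{WW.SS} and \eqref{def:mass}; the first term is bounded using $m \ls M_{i}$ and $r \ageq M_{f}$, and the second using \eqref{dvlogOmg}, yielding a total change of $O_{M_{i},M_{f},M_{f}^{-1}}(v_{0}^{-1}+\de^{1/2})$ which is absorbed by the final choice of $\de$.

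The main obstacle is this refined control of $\rd_{v}r(u,v_{0})$ in the regime $u \in [u_{0}',u_{1}]$, where the improved estimate \eqref{dvr.imp.bound} is not available; the argument crucially exploits both the mass smallness from Proposition~\ref{small.mass.first.region} and the lower bound $r \ageq M_{f}$ throughout the short-pulse region, so that the $u$-integral of $1/r^{2}$ remains bounded by a constant depending only on $M_{f}^{-1}$.
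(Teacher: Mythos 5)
Your proposal is correct in substance but takes a genuinely different route from the paper. The paper integrates the mass in $v$ only along the single line $u = u_0'$ (where \eqref{dvr.imp.bound} gives $\rd_v r \approx \f12$ and $r \gtrsim v_0$, so the original form $\rd_v r\, \rd_v m = \f 12 (1-\f{2m}{r}) r^2 (\rd_v\phi)^2$ is harmless), and then transports the result in $u$ along $\{v = v_0+\de\}$ using the $\rd_u m$ equation together with the smallness of $\int_{u_0'}^{u_1} r^2(\rd_u\phi)^2$, which is inherited from Proposition~\ref{small.mass.first.region} via \eqref{small.mass.incoming} and \eqref{duphi.shortpulse}. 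You instead integrate in $v$ along \emph{every} line $u = \mathrm{const} \in [u_0', u_1]$, which forces you to (a) rewrite the mass equation as $\rd_v m = -\f{2\rd_u r}{\Omg^2}(r\rd_v\phi)^2$ to survive the sign change of $\rd_v r$ (a correct and well-motivated step), and (b) prove the refined pointwise estimate $-\f{2\rd_u r}{\Omg^2} = 1 + O(v_0^{-1}+\eta^2)$ throughout the rectangle, which amounts to extending \eqref{dvr.imp.bound} to $u \in [u_0', u_1]$ with error $O_{M_i, M_f^{-1}}(v_0^{-1}+\eta^2)$ by splitting the $u$-integration at $u_0'$ and invoking the mass smallness of Proposition~\ref{small.mass.first.region} on $[u_0', u]$; your derivation of this is sound. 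The paper's route buys economy: it never needs pointwise control of $\rd_v r$ or $\Omg^2$ beyond what Proposition~\ref{prop:shortpulse} already records. Your route buys a stronger intermediate statement (uniform control of the null lapse ratio deep into the short-pulse rectangle) at the cost of more computation. The only point you gloss over is the bound $m \ls M_i$ on $[u_0',u_1]\times[v_0,v_0+\de]$ used when integrating $\f{m}{2r^2}$ in the propagation identity for $-\f{\rd_u r}{\Omg^2}$; this is not among the recorded estimates, but even the crude bound $|m| \ls v_0^2 M_f^{-1}$ obtained algebraically from \eqref{def:mass}, \eqref{r.est}, \eqref{dvr.shortpulse}, \eqref{dur.shortpulse} and \eqref{BA:shortpulse} suffices, since that term carries a factor of $\de$ and $\de$ is chosen last, depending on $v_0$.
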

\begin{proof}
The precise estimates for $r\rd_v\phi$ in Proposition~\ref{prop:shortpulse} allow us to show in Step~1 below that the mass difference on $u=u_0'$ for $v\in [v_0, v_0+\de]$ is, up to small error terms, $M_f$. Then in Step~2, we will bound the mass variation on the $v=v_0+\de$ hypersurface for $u\in \left[u_0', u_1\right]$

\pfstep{Step~1: Estimating the mass variation on $u=u_0'$, $v\in [v_0, v_0+\de]$} The key point here is that for appropriately chosen parameters, this hypersurface lies within a large $r$ region. Let $v\in [v_0,v_0+\de]$. By \eqref{dvr.imp.bound}, (and the obvious bound $\f{2m}{r}(u_0',v)\ls_{M_i} v_0^{-1}$), we have $|\f{\rd_v r}{1-\f{2m}{r}}(u_0',v)-\f 12|\ls_{M_i} v_0^{-1}$. Therefore, by \eqref{eq:SSESF} and \eqref{dvphi.upper.bound},
\begin{equation*}
\begin{split}
&\left|m(u_0',v_0+\de)-m(u_0',v_0)-\int_{v_0}^{v_0+\de} r^2(\rd_v\phi)^2(u_0',v')\, \ud v'\right|\\
\ls &\int_{v_0}^{v_0+\de} \left|\f{1-\f{2m}{r}}{\rd_v r}-2\right|r^2(\rd_v\phi)^2(u_0',v')\, \ud v'\ls_{M_i} v_0^{-1}.
\end{split}
\end{equation*}
We estimate the last term on the LHS as follows (using \eqref{chi.prop}):
\begin{equation*}
\begin{split}
&\left|\int_{v_0}^{v_0+\de} r^2(\rd_v\phi)^2(u_0',v')\, \ud v'-M_f \right|\\
\ls &\underbrace{\left|\int_{v_0}^{v_0+\de} r^2(\rd_v\phi)^2(u_0',v')\, \ud v'- \int_{v_0}^{v_0+\de} r^2(\rd_v\phi)^2(0,v')\, \ud v'\right|}_{=:I} \\
& + \underbrace{\left|\int_{v_0}^{v_0+\de} (\rd_v(r\phi))^2(0,v')\, \ud v'-M_f \int_0^1 \chi^2(s) \, \ud s\right|}_{=:II}\\
& + \underbrace{\left|\int_{v_0}^{v_0+\de} (\rd_v(r\phi))^2(0,v')\, \ud v'-\int_{v_0}^{v_0+\de} r^2(\rd_v\phi)^2(0,v')\, \ud v'\right|}_{=:III}.\\
\end{split}
\end{equation*}
By \eqref{dvphi.shortpulse} and \eqref{dvphi.upper.bound}, $I \ls_{M_f,\, M_f^{-1}} \de v_0^2(\eta+\de^{\f 12}) \dlt^{-\frac{1}{2}} \ls_{M_{f}, M_{f}^{-1}, v_{0}} \dlt^{\frac{1}{2}}.$
By \eqref{chi.prop} and Definition~\ref{def:data}, $II=0$.
Finally, by \eqref{phi.shortpulse}, \eqref{dvr.imp.bound} and \eqref{dvphi.upper.bound},
$$III\ls \left|\int_{v_0}^{v_0+\de}\left(r|\rd_v r||\phi||\rd_v\phi|+|\rd_v r|^2|\phi|^2\right)(0,v')\, \ud v'\right|\ls_{M_f, \, M_f^{-1},\, v_0} \de^{\f 12}.$$
Hence, using also Proposition~\ref{small.mass.first.region}, we conclude that
$$|m(u_0',v_0+\de)-M_f|\ls_{M_i} \left(v_0^{-1}+\eta^2\right)+O_{M_f,\, M_f^{-1},\, v_0}(\de^{\f 12})\ls_{M_i} \left(v_0^{-1}+\eta^2\right),$$
for $\de$ sufficiently small.

\pfstep{Step~2: Estimating the mass variation on $v=v_0+\de$ for $u\geq u_0'$} For this we need to use Proposition~\ref{small.mass.first.region}, which implies (since $m=0$ at $u=v=v_0$) that
\begin{equation}\label{small.mass.incoming}
\int_{u_0'}^{v_0} (-\rd_u m)(u, v_0)\, \ud u\ls_{M_i} \eta^2.
\end{equation}
Moreover, \eqref{eq:main:geo} together with \eqref{r.est} and \eqref{dvrOmg.bound} imply that for $u\in [0, u_1]$,
$$(-\rd_u m)(u,v_0)\gtrsim r^2(\rd_u\phi)^2(u, v_0).$$
Combining this with \eqref{small.mass.incoming} yields (since $u_1<v_0$)
$$\int_{u_0'}^{u_1} r^2(\rd_u\phi)^2(u,v_0) \, \ud u\ls_{M_i} \eta^2.$$
By \eqref{duphi.shortpulse}, and choosing $\de$ sufficiently small, it also holds that
$$\int_{u_0'}^{u_1} r^2(\rd_u\phi)^2(u,v_0+\de) \, \ud u\ls_{M_i,\,M_f,\, M_f^{-1}} \eta^2.$$
Hence, using \eqref{dvrOmg.bound}, for every $u\in [u_0', u_1]$,
\begin{equation*}
\begin{split}
\left|m(u,v_0+\de)-m(u_0',v_0+\de) \right|
\ls &\int_{u_0'}^{u_1} r^2(\rd_u\phi)^2(u', v_0+\de)\, \ud u'
\ls_{M_i,\,M_f,\, M_f^{-1}} \eta^2.
\end{split}
\end{equation*}
We conclude the proof of the Proposition by combining the conclusions of Steps~1 and 2.
\end{proof}

\subsection{The region after the short pulse}

\begin{proposition}\label{prop.afterpulse}
In the set $\{(u,v): u\in [0, u_0'],\, v\in[v_0+\de,\infty)\}$,
$$\left|\f{r\rd_v\phi}{\rd_v r}\right|(u,v) \ls_{M_i} \eta v_0^{\f 12} r^{-1}(u,v).$$
In particular, this holds on $\{(u_0',v): v\in [v_0+\de,\infty) \}$, (i.e., the constant $u=u_0'$ hypersurface restricted to the future of the short pulse region).
\end{proposition}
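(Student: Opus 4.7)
The key observation is that on the initial null cone $\{u=0\}$, for $v \geq v_0+\de$ we have $\Phi(v) \equiv 0$, and moreover the mean-zero property of $\chi$ (cf. \eqref{chi.prop}) together with Definition~\ref{def:data} ensures $\int_0^v \Phi(v') \, \ud v' = 0$ for all $v \geq v_0+\de$, as each bump and the short pulse integrate to zero separately. Consequently $r\phi(0,v) = \int_0^v \f12 \Phi(v') \, \ud v' = 0$, hence also $\phi(0,v) = \rd_v(r\phi)(0,v) = r\rd_v\phi(0,v) = 0$ throughout this range. The problem thus reduces to propagating the small ``ingoing'' data on $\{v = v_0+\de, \, u \in [0, u_0']\}$, where \eqref{phi.shortpulse} together with the lower bound $r \gtrsim v_0$ from \eqref{r.est} give $|r\phi(u, v_0+\de)| \ls \eta r^{1/2} \ls \eta v_0^{1/2}$ (the $O_{M_f, M_f^{-1}}(\de^{1/2})$ error in \eqref{phi.shortpulse} is absorbed by choosing $\de$ small depending on $v_0$).

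The plan is to run a bootstrap argument on $\calR := \{(u,v) : u \in [0, u_0'], \, v \geq v_0+\de\}$ with principal bootstrap assumption $|r\phi|(u,v) \leq C_1 \eta v_0^{1/2}$, together with ``near-Minkowskian'' geometric bounds $\rd_v r \in [\f13, \f23]$, $\rd_u r \in [-\f23, -\f13]$, $m \ls M_i$, $1 - \f{2m}{r} \geq \f12$, and $r(u,v) \gtrsim v_0$. Rewriting the wave equation for $r\phi$ from \eqref{eq:SSESF} as $\rd_u \rd_v(r\phi) = V \cdot (r\phi)$ with $V = \f{2m \rd_u r \rd_v r}{(1 - \f{2m}{r}) r^3}$, the bootstrap gives $|V| \ls M_i/r^3$. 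Integrating in $u$ from $0$ (where $\rd_v(r\phi) = 0$) and then in $v$ from $v_0+\de$, and using $r(u',v') \sim (v'-u')/2$ to compute $\int_0^u \int_{v_0+\de}^v r^{-3} \, \ud v' \, \ud u' \ls \int_0^u r^{-2}(u', v_0+\de) \, \ud u' \ls 1/v_0$, I obtain $|r\phi(u,v)| \leq |r\phi(u, v_0+\de)| + O(M_i C_1 \eta v_0^{-1/2})$, which closes the bootstrap for $v_0$ large depending on $M_i$ and $C_1$ chosen appropriately. The geometric bootstrap closes in parallel via the identity $\rd_u \log \rd_v r = \f{2m \rd_u r}{(1 - \f{2m}{r}) r^2}$ (of size $\ls M_i/r^2$, integrable to $O(M_i/v_0)$ in $u$) and the Raychaudhuri equations \eqref{eqn.Ray}.

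With the bootstrap closed, the proposition follows from the decomposition $r \rd_v \phi = \rd_v(r\phi) - (\rd_v r) \phi$. The second term satisfies $(\rd_v r) |\phi| \leq \f12 |r\phi|/r \ls \eta v_0^{1/2}/r$. For the first, integrating $\rd_u \rd_v(r\phi) = V \cdot (r\phi)$ in $u$ from $0$ yields $|\rd_v(r\phi)(u,v)| \leq \int_0^u |V||r\phi| \, \ud u' \ls M_i \eta v_0^{1/2} \int_0^u r^{-3}(u',v) \, \ud u' \ls M_i \eta v_0^{1/2}/r^2(u,v)$, where the last step again uses $r(u',v) \sim (v-u')/2$. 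For $r \gtrsim v_0$ large compared to $M_i$, the $\phi$-term dominates, giving $|r\rd_v\phi| \ls \eta v_0^{1/2}/r$; dividing by $\rd_v r \geq \f13$ completes the proof.

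The main technical hurdle lies in closing the geometric bootstrap, in particular preserving the lower bound $\rd_v r \geq \f13$ uniformly throughout $\calR$---without this, the quotient $r\rd_v\phi/\rd_v r$ could blow up even when $|r\rd_v\phi|$ is small. This hinges on the smallness of $m/r^2 \ls M_i/v_0^2$ combined with its integrability in $u$ over the bounded interval $[0, u_0']$, which produces only $O(M_i/v_0)$ corrections to the exponential solving the linear ODE $\rd_u \log \rd_v r = \f{2m \rd_u r}{(1-\f{2m}{r}) r^2}$ along each outgoing ray. This smallness is ultimately a consequence of $\calR$ being well separated from both the axis and the short pulse region.
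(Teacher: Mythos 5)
Your proposal is correct in its essentials and follows the same core strategy as the paper: a bootstrap on $r\abs{\phi}$ in the region after the pulse, combined with the wave equation $\rd_u\rd_v(r\phi)=V\cdot(r\phi)$ whose potential $V$ decays like $M_i r^{-3}$, trivial data on $\set{u=0,\ v\geq v_0+\de}$, and the final decomposition $r\rd_v\phi=\rd_v(r\phi)-(\rd_v r)\phi$. Where you diverge is in the bookkeeping of the geometry. The paper works with the renormalized quantity $\frac{\rd_v(r\phi)}{\rd_v r}$, whose $\rd_u$-transport equation \eqref{eq:dUdvpsi} has coefficients of the form $\frac{2m\rd_u r}{(1-\frac{2m}{r})r^{2}}$ and $\frac{2m\rd_u r}{(1-\frac{2m}{r})r^{3}}$; all the $u$- and $v$-integrals are then converted to $\ud r$-integrals via $\ud r=-(-\rd_u r)\,\ud u=\rd_v r\,\ud v$, so only the soft facts $\rd_v r>0$, $\rd_u r<0$, $m\leq m_i$ and $\frac{2m}{r}<\frac12$ are needed, and no separate bootstrap on $\rd_v r$, $\rd_u r$ is run (the paper is admittedly terse about why even these soft facts persist). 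You instead work with $\rd_v(r\phi)$ directly and pay for it with an explicit near-Minkowskian geometric bootstrap plus the comparison $r\sim v-u$; this is more self-contained but heavier, and it buys you the same estimate. One numerical slip: your claimed range $\rd_u r\in[-\frac23,-\frac13]$ cannot hold on the data hypersurface $\set{v=v_0+\de,\ u\in[0,u_0']}$, since \eqref{eq:main:geo} only gives $-\rd_u r>\frac16$ at $v=v_0$ and crossing the pulse perturbs this by $O(\de/v_0)$; you should widen the interval (e.g.\ $-\rd_u r\in[\frac{1}{7},1]$). This does not affect the structure of your argument, since the lower bound on $-\rd_u r$ is not used quantitatively anywhere essential. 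Your closing of the $r\abs{\phi}$ bootstrap (gaining a factor $M_i C_1 v_0^{-1}$ over the initial constant) is a legitimate alternative to the paper's device of bootstrapping the weaker bound $\eta v_0$ and improving it to $\eta v_0^{1/2}$.
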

\begin{proof}
The key point is to exploit the decay in $r$ here. For this purpose, it will be convenient to consider the variables $r\phi$ and $\f{\rd_v(r\phi)}{\rd_v r}$. Note that $r\phi$ is \underline{not} small. Nevertheless, the quantity of interest $\f{r\rd_v\phi}{\rd_v r}$ is small.

\pfstep{Step~1: Bootstrap argument}
\textbf{In this proof, we take $(u,v)\in \{(u,v): u\in [0, u_0'],\, v\in[v_0+\de,\infty)\}$.}
Assume as a bootstrap assumption
\begin{equation}\label{BA:phi.last.region}
r|\phi|(u,v)\leq \eta v_0.
\end{equation}
Note that by \eqref{phi.shortpulse}, for $u\in [0, u_0']$, $r|\phi|(u,v_0+\de) \ls \eta r^{\f 12} +O_{M_f\, M_f^{-1}}(\de^{\f 12}) \ls \eta v_0^{\f 12}$ for $\de$ sufficiently small. Hence, \eqref{BA:phi.last.region} holds initially on\footnote{That it also holds initially on $\{u=0\}$ is trivial in view of Lemma~\ref{lem:data}.} $\{v=v_0 +\de \}$ (for $v_0$ sufficiently large).

By \eqref{eq:SSESF},
\begin{equation} \label{eq:dUdvpsi}
	\rd_{u} \left(\frac{\rd_{v}  (r\phi)}{\rd_{v} r}  \right) 
	= - \frac{\rd_{u} \rd_{v} r}{\rd_{v} r} \frac{1}{\rd_{v} r} \rd_{v} (r\phi) + \frac{\rd_{u} \rd_{v} r}{r \rd_{v} r} r\phi
	= - \frac{2m \rd_{u} r }{(1-\frac{2m}{r}) r^{2}} \f{\rd_{v} (r\phi)}{\rd_v r} + \frac{2m \rd_{u} r }{(1-\frac{2m}{r}) r^{3}}  r\phi.
\end{equation}
Integrating \eqref{eq:dUdvpsi} using \eqref{BA:phi.last.region} together with the following observations
\begin{itemize}
\item the initial data for $\f{\rd_v(r\phi)}{\rd_v r}$ are trivial;
\item $\int_{0}^{u_0'} \frac{2m (-\rd_{u} r) }{(1-\frac{2m}{r}) r^{2}}(u,v)\, \ud u\leq  \int_{r(u_0',v)}^{\infty} \frac{4m_i (-\rd_{u} r) }{r^{2}}\, \ud r\ls M_i v_0^{-1}$;
\item $\int_{0}^{u} \frac{2m (-\rd_{u} r) }{(1-\frac{2m}{r}) r^{3}} r \abs{\phi}(u,v)\, \ud u \leq \eta v_{0} \int_{r(u,v)}^{\infty} \frac{4m_i (-\rd_{u} r) }{r^{3}}\, \ud r\ls \eta v_{0} M_i  r^{-2}(u,v)$;
\end{itemize}
we obtain
\begin{equation}\label{r2dvrphi}
r^2\left|\frac{\rd_{v}  (r\phi)}{\rd_{v} r}\right|(u,v)\ls M_i\cdot e^{M_i v_0^{-1}} \eta v_{0} \ls_{M_i} \eta v_{0}. 
\end{equation}
Using the fundamental theorem of calculus and \eqref{r2dvrphi}, we obtain
\begin{equation}\label{rphi.est.final}
\begin{split}
r|\phi|(u,v)\ls_{M_i}& \eta v_0^{\f 12}+\left(\sup_{v''\in [v_0+\de, v]}r^2\left|\frac{\rd_{v}  (r\phi)}{\rd_{v} r}\right|(u,v'')\right)\int_{v_0+\de}^v r^{-2} (\rd_v r)(u,v')\, \ud v' \\
\ls_{M_i}& \eta v_0^{\f 12}+ \eta v_{0} \int_{r(u,v_0+\de)}^\infty \f{\ud r'}{(r')^2}\ls \eta v_0^{\f 12}+ \eta \ls \eta v_0^{\f 12}.
\end{split}
\end{equation}
For $v_0$ sufficiently large, this improves \eqref{BA:phi.last.region} and we have thus closed the bootstrap argument.
\pfstep{Step~2: Estimates for $\rd_v\phi$} Finally, we compute using \eqref{r2dvrphi} and \eqref{rphi.est.final} that
\begin{equation*}
\left|\f{r\rd_v\phi}{\rd_v r}\right|(u,v) =\left|\f{\rd_v(r\phi)}{\rd_v r}-\phi\right|\ls_{M_i} \eta v_0^{\f 12} r^{-1}(u,v). \qedhere
\end{equation*}

\end{proof}

The previous estimate allows us to compute the mass on the hypersurface $\{u=u_0'\}$:
\begin{proposition}\label{mass.final}
For $v\in [v_0+\de,\infty)$,
$$\left|m \left(u_0',v\right)- M_f\right|\ls_{M_i} \eta^2+v_0^{-1} .$$
\end{proposition}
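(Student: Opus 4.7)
The plan is to combine Proposition~\ref{mass.shortpulse} (which pins the mass down to $M_f$, modulo small errors, on $\{(u_0', v_0+\de)\}$) with the decay estimate from Proposition~\ref{prop.afterpulse} to control the mass variation along $\{u = u_0'\}$ for $v \in [v_0+\de, \infty)$. By Proposition~\ref{mass.shortpulse} we already have
$$\abs{m(u_0', v_0+\de) - M_f} \ls_{M_i} v_0^{-1} + \eta^2,$$
so it suffices to bound $\abs{m(u_0', v) - m(u_0', v_0+\de)}$ by $\ls_{M_i} \eta^2$ uniformly in $v \geq v_0+\de$.

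First, I would verify that in the set $\set{u = u_0', v \geq v_0+\de}$ we remain in a ``regular'' regime where $\rd_v r > 0$ and $1 - \f{2m}{r}$ is bounded. At $v = v_0+\de$ this follows from \eqref{dvr.imp.bound}, and since $r(u_0', v_0+\de) \geq \f{1}{14}v_0$ by \eqref{r.est} (using $v_0 - u_0' = v_0/2$), the Raychaudhuri equation \eqref{eqn.Ray} together with the initial positivity of $\f{\rd_v r}{\Omg^2}$ and the smallness of $r |\rd_v \phi|^2$ in this region (via Proposition~\ref{prop.afterpulse}) will propagate $\rd_v r > 0$ along $\set{u = u_0'}$. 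Combined with $m \leq m_i \ls M_i$ and $r(u_0', v) \geq r(u_0', v_0+\de) \gtrsim v_0$, this gives $\abs{1 - \f{2m}{r}} \ls 1$ on the segment.

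Next, from the last equation of \eqref{eq:SSESF} I rewrite
$$\rd_v m = \f{1}{2}\bb(1 - \f{2m}{r}\bb)\bb(\f{r \rd_v \phi}{\rd_v r}\bb)^{2} \rd_v r.$$
Applying the key bound of Proposition~\ref{prop.afterpulse}, $\abs{r \rd_v \phi / \rd_v r}(u_0', v) \ls_{M_i} \eta v_0^{1/2} r^{-1}(u_0', v)$, I obtain
$$\abs{\rd_v m}(u_0', v) \ls_{M_i} \eta^2 v_0 \cdot \f{\rd_v r}{r^2}(u_0', v).$$
Integrating in $v$ from $v_0+\de$ to $v$ and changing variables to $r$ (legal since $\rd_v r > 0$) yields
$$\abs{m(u_0', v) - m(u_0', v_0+\de)} \ls_{M_i} \eta^2 v_0 \int_{r(u_0', v_0+\de)}^{\infty} \f{\ud r'}{(r')^2} \ls_{M_i} \f{\eta^2 v_0}{r(u_0', v_0+\de)} \ls \eta^2,$$
where the last step uses $r(u_0', v_0+\de) \gtrsim v_0$. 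Combining with the bound on $m(u_0', v_0+\de)$ from Step 1 gives the claim.

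The only genuinely subtle point is the propagation $\rd_v r > 0$ along $\set{u = u_0'}$ beyond $v_0 + \de$, needed to legitimize the change of variables and to keep $1-\f{2m}{r}$ bounded; everything else is a direct application of the already-established pointwise estimates. This step, however, is essentially automatic from the Raychaudhuri monotonicity of $\rd_v r/\Omg^2$, the explicit positivity at $v = v_0+\de$ given by \eqref{dvr.imp.bound}, and the smallness of $r (\rd_v \phi)^2$ in the post-pulse region (Proposition~\ref{prop.afterpulse}), so no new analytic input is required.
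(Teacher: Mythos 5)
Your argument is correct and follows essentially the same route as the paper: anchor the mass at $(u_0', v_0+\de)$ via Proposition~\ref{mass.shortpulse}, then integrate the $\rd_v m$ equation along $\{u=u_0'\}$ using the pointwise bound of Proposition~\ref{prop.afterpulse} and a change of variables to $r$; the only cosmetic difference is that the paper absorbs the $\bigl(1-\f{2m}{r}\bigr)$ factor into an exact exponential integrating factor whose exponent it shows is $O_{M_i}(\eta^2 v_0^{-1})$, whereas you bound that factor by $O(1)$ a priori via $m\ls M_i$ and $r\gtrsim v_0$. One small caution: the Raychaudhuri equation makes $\rd_v r/\Omg^2$ \emph{non-increasing} in $v$, so positivity at $v=v_0+\de$ does not propagate by monotonicity alone — you need the quantitative smallness of $\int r(\rd_v\phi)^2\Omg^{-2}\,\ud v$ (which your Proposition~\ref{prop.afterpulse} input does supply, within a bootstrap), a point the paper also leaves implicit.
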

\begin{proof}
Integrating the last equation in \eqref{eq:SSESF}, for any $v\geq v_0+\de$,
\begin{equation*}
\begin{split}
m(u_0',v)=& e^{-\int_{v_0+\de}^v r\f{(\rd_v\phi)^2}{\rd_v r}(u_0',v') \, \ud v'} m(u_0', v_0+\de)\\
& +\f 12 e^{-\int_{v_0+\de}^v r\f{(\rd_v\phi)^2}{\rd_v r}(u_0',v') \, \ud v'} \int_{v_0+\de}^v  r^2\f{(\rd_v\phi)^2}{\rd_v r}(u_{0}',v') e^{\int_{v_0+\de}^{v'} r\f{(\rd_v\phi)^2}{\rd_v r}(u_0',v'') \, \ud v''}\, \ud v'.
\end{split}
\end{equation*}
We first show that the integrals in the exponentials are small. More precisely, by Proposition~\ref{prop.afterpulse},
\begin{equation*}
\begin{split}
\int_{v_0+\de}^v r\f{(\rd_v\phi)^2}{\rd_v r}(u_0',v') \, \ud v'\ls_{M_i} & \eta^{2} v_0 \int_{v_0+\de}^\infty \f{\rd_v r}{r^3}(u_0',v')\,\ud v'\\
\ls_{M_i} & \eta^{2} v_0 \int_{r(u_0',v_0+\de)}^\infty \f{\ud r'}{(r')^3} \ls_{M_i} \eta^{2} v_0^{-1}.
\end{split}
\end{equation*}
A similar argument using Proposition~\ref{prop.afterpulse} yields
\begin{equation*}
\begin{split}
\int_{v_0+\de}^v r^{2} \f{(\rd_v\phi)^2}{\rd_v r}(u_0',v') \, \ud v'\ls_{M_i} \eta^2.
\end{split}
\end{equation*}
Finally, by Proposition~\ref{mass.shortpulse}, $\left|m(u_0', v_0+\de)-M_f\right|\ls_{M_i} \eta^2 + v_{0}^{-1}$.
Combining these observations give the desired conclusion.
\end{proof}

\subsection{Conclusion of the proof using monotonicity of mass}

\begin{proposition}
The final mass $m_f$ satisfies $|m_f-M_f|\leq \ep$.
\end{proposition}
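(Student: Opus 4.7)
The plan is to deduce $|m_f - M_f| \leq \eps$ by taking appropriate limits of the Hawking mass bounds in Propositions~\ref{mass.shortpulse} and \ref{mass.final}, combined with two monotonicity properties: (i) the Bondi mass $m_{Bondi}(u)$ is non-increasing in $u$ along future null infinity; (ii) in the regular region exterior to the event horizon, where $\rd_v r > 0$ and $1 - \f{2m}{r} > 0$, the Raychaudhuri-type identity $\rd_v r \cdot \rd_v m = \f{1}{2}(1 - \f{2m}{r}) r^2 (\rd_v \phi)^2$ from \eqref{eq:SSESF} forces $\rd_v m \geq 0$ along outgoing null rays.

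For the upper bound, I would send $v \to \infty$ in Proposition~\ref{mass.final} to obtain $|m_{Bondi}(u_0') - M_f| \ls_{M_i} \eta^2 + v_0^{-1}$. Monotonicity (i) then yields
\[
m_f = \lim_{u \to u_{\mathcal{E}\mathcal{H}}^-} m_{Bondi}(u) \leq m_{Bondi}(u_0') \leq M_f + C_{M_i}(\eta^2 + v_0^{-1}).
\]

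For the lower bound, I would fix any $u \in (u_0', u_{\mathcal{E}\mathcal{H}})$. According to the Penrose diagram in Figure~\ref{fig:BH} (whose validity is taken from Christodoulou~\cite{Christodoulou:1991}, as acknowledged in the footnote of Section~1), the null ray $\set{u} \times [v_0 + \de, \infty)$ lies strictly in the regular region exterior to the event horizon. Hence by (ii), $m(u, \cdot)$ is non-decreasing in $v$ along this ray, so $m_{Bondi}(u) \geq m(u, v_0 + \de)$. Since $u_0' < u < u_{\mathcal{E}\mathcal{H}} \leq u_1$, Proposition~\ref{mass.shortpulse} further bounds this below by $M_f - C_{M_i, M_f, M_f^{-1}}(v_0^{-1} + \eta^2)$; sending $u \to u_{\mathcal{E}\mathcal{H}}^-$ produces $m_f \geq M_f - C_{M_i, M_f, M_f^{-1}}(v_0^{-1} + \eta^2)$.

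Combining the two bounds gives $|m_f - M_f| \ls_{M_i, M_f, M_f^{-1}} v_0^{-1} + \eta^2$. Since the implicit constant depends only on the already-fixed mass parameters, we can force the right-hand side below $\eps$ by choosing $v_0$ sufficiently large and then $\eta = \eta_0$ sufficiently small (the parameter $\de$ is fixed last and does not enter this estimate). The step that most needs care --- and the main conceptual obstacle if one wanted a fully self-contained argument --- is the positivity of $\rd_v r$ throughout the strip $(u_0', u_{\mathcal{E}\mathcal{H}}) \times [v_0 + \de, \infty)$, since our direct pointwise control of $\rd_v r$ was only established for $u \leq u_0'$; here this positivity is imported from the global structure of the maximal globally hyperbolic development already invoked for the validity of the Penrose diagram.
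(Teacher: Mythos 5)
Your proposal is correct and follows essentially the same route as the paper: it combines the bounds on the two null hypersurfaces $\{u=u_0'\}\times[v_0+\de,\infty)$ and $[u_0',u_1]\times\{v_0+\de\}$ from Propositions~\ref{mass.final} and \ref{mass.shortpulse} with the monotonicity of $m$ (non-increasing in $u$, non-decreasing in $v$) in the exterior region, using the trapped sphere at $(u_1,v_0+\de)$ to place the event horizon at $u_{\calE\calH}\leq u_1$. Your closing remark correctly identifies the only imported structural fact (positivity of $\rd_v r$, hence of $\rd_v m$, in the past of null infinity), which the paper likewise takes from the standard Raychaudhuri/global-structure arguments rather than proving afresh.
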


\begin{proof}
Choosing $v_*$ sufficiently large, $\eta_{0}$$(\geq \eta)$ sufficiently small, (which then fixes $v_0$), and then $\de$ sufficiently small, Propositions~\ref{mass.shortpulse} and \ref{mass.final} imply that $|m(u,v)-M_f|<\ep$ on $\underline{C}_{(far)} \cup C_{(far)} := \{(u_0',v):v\in [v_0+\de,\infty) \}\cup \{(u,v_0+\de):u\in [u_0',u_1] \}$. These two hypersurfaces are depicted in the Penrose diagram in Figure~\ref{fig:mass}. 
Notice that by Proposition~\ref{prop:shortpulse}, the $2$-sphere given by $(u,v)=(u_1,v_0+\de)$ is trapped. Therefore, \eqref{eqn.Ray} implies that this sphere is \underline{not} causally connected to future null infinity (see Figure~\ref{fig:mass}). By the monotonicity of $m$ (increasing in $v$ and decreasing in $u$, cf. \eqref{eq:SSESF}) in the exterior region (i.e., the past of future null infinity), we thus infer that in the exterior region, for $v\geq v_0+\de$ and $u\geq u_1$, $|m(u,v)-M_f|<\ep$. In particular, $m_f$, which is a limit first in $v$ and then in $u$, must obey the desired bound.
\end{proof}

\begin{figure}[h] 
\begin{center}
\def\svgwidth{200px}
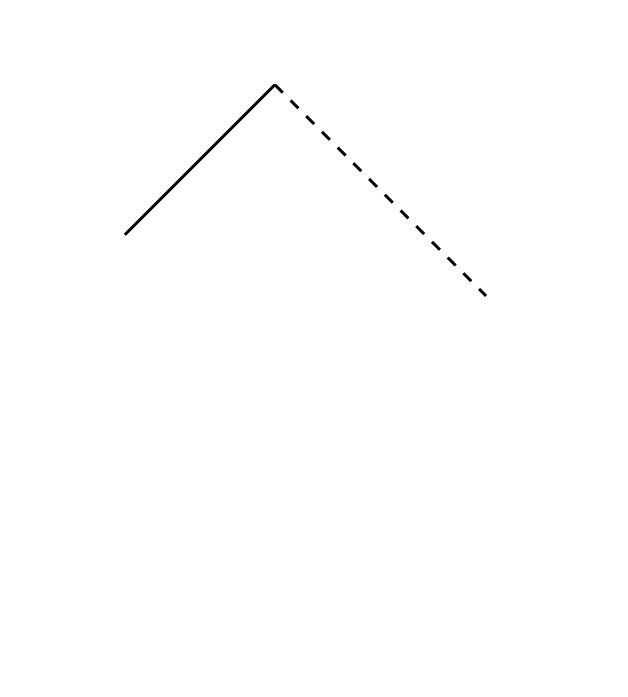 
\caption{} \label{fig:mass}
\end{center}
\end{figure}

\begin{remark}
Note that in the argument, we have not controlled the solution in the full exterior region of the black hole. Nevertheless, using the monotonicity of mass, the information on a subset of the exterior region is sufficient for computing the final mass up to an acceptable error.
\end{remark}

We summarize what we have achieved so far as follows:
\begin{theorem} \label{thm:main-eta}
Given $M_{i} \geq M_{f} > 0$ and $\eps > 0$, let $v_{\ast}^{-1}$, $\eta_{0}$ and $\dlt$ be sufficiently small positive numbers (chosen in this order, where the quantity on the right may depend on those on the left). Then for any $\eta \in [0, \eta_{0}]$, the solution $(r, \phi, m)$ to \eqref{ESS} with the initial data specified in Section~\ref{subsec:id} obeys
\begin{equation*}
	\left| m_{i} - \left(\frac{\eta}{\eta_{0}}\right)^{2} (M_{i} - M_{f}) + M_{f} \right| \leq \eps, \quad
	\abs{m_{f} - M_{f}} \leq \eps.
\end{equation*}

\end{theorem}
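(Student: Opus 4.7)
The statement is essentially a consolidation of the preceding mass computations, so my plan is to assemble the initial and final mass bounds separately and then verify that the parameters can be chosen in the prescribed order so that all error terms are absorbed into $\eps$.

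First, the initial mass estimate is immediate from Proposition~\ref{prop:initial.mass}, which gives
\[
	m_{i} = \left( \tfrac{\eta}{\eta_{0}}\right)^{2} (M_{i} - M_{f}) + M_{f} + O_{M_{i}}\!\left(\eta^{2}+ \eta^{2} v_{\ast}^{-1/2}+M_{f} v_{0}^{-1}\right).
\]
Since $\eta \leq \eta_{0}$ and $v_{0} \geq v_{\ast}$ by construction, the error is $\leq C(M_{i}) \, (\eta_{0}^{2} + M_{f} v_{\ast}^{-1})$. Thus the first inequality in Theorem~\ref{thm:main-eta} follows by requiring $v_{\ast}$ large and $\eta_{0}$ small (in that order), both depending on $M_{i}$, $M_{f}$ and $\eps$.

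For the final mass, my plan is to combine Proposition~\ref{mass.final} with monotonicity of the Hawking mass in the exterior region, exactly as in the Proposition preceding Figure~\ref{fig:mass}. Proposition~\ref{mass.final} gives $\abs{m(u_{0}', v) - M_{f}} \lesssim_{M_{i}} \eta^{2} + v_{0}^{-1}$ for all $v \geq v_{0}+\dlt$. Similarly, Proposition~\ref{mass.shortpulse} gives $|m(u, v_0 + \delta) - M_f| \lesssim_{M_i, M_f, M_f^{-1}} (v_0^{-1} + \eta^2)$ for $u \in [u_0', u_1]$. These two estimates control $m$ on the curves $\underline{C}_{(\mathrm{far})} = \{u=u_{0}'\}$ and $C_{(\mathrm{far})} = \{v = v_{0} + \dlt,\, u_{0}' \leq u \leq u_{1}\}$. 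The sphere at $(u_{1}, v_{0}+\dlt)$ is trapped by Proposition~\ref{prop:shortpulse}, hence lies in the black hole region and is not causally connected to $\calI^{+}$. By the monotonicity encoded in the last two equations of \eqref{eq:SSESF} (namely $\rd_{v} m \geq 0$ and $\rd_{u} m \leq 0$ in the exterior where $1-\tfrac{2m}{r} > 0$), the bound on $\underline{C}_{(\mathrm{far})} \cup C_{(\mathrm{far})}$ propagates to the entire exterior region to the future, and passing to the limit $v \to \infty$ followed by $u \to u_{\calE\calH}^{-}$ yields $\abs{m_{f} - M_{f}} \leq C(M_{i}) (\eta_{0}^{2} + v_{\ast}^{-1}) + o_{\dlt \to 0}(1)$.

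Finally, to make everything rigorous I would choose the parameters in the exact order stated: first take $v_{\ast}$ large enough (depending on $M_{i}$, $M_{f}$, $\eps$) to kill all $v_{0}^{-1}$ and $v_{\ast}^{-1/2}$ contributions, then $\eta_{0}$ small enough (depending on $v_{\ast}$) to kill the $\eta^{2}$ contributions uniformly in $\eta \in [0, \eta_{0}]$, and finally $\dlt$ small enough (depending on $v_{\ast}$, $\eta_{0}$) so that the hypotheses and $O_{v_{0}}(\dlt^{1/2})$ error terms appearing in Propositions~\ref{prop:shortpulse}--\ref{mass.final} are all majorized by $\eps$. The only mild subtlety---and the one place I would double-check---is that the implicit constants from Proposition~\ref{prop:initial.mass} depend only on $M_{i}$ and not on $\dlt$, so that the $\dlt$-dependent shrinkage does not feed back into the initial mass computation; this is clear from the proof of that proposition since the large-$\Phi$ contribution on $[v_{0}, v_{0}+\dlt]$ enters only through a $\dlt^{1/2} M_{f}^{1/2}$ factor, which is benign. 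Modulo this bookkeeping, the theorem is a direct corollary of the work already done.
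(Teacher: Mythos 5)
Your proposal is correct and follows essentially the same route as the paper: the first bound is read off from Proposition~\ref{prop:initial.mass} (with $v_{0}\gtrsim v_{\ast}$ and $\eta\leq\eta_{0}$), and the second is obtained by combining Propositions~\ref{mass.shortpulse} and~\ref{mass.final} with the trapped sphere at $(u_{1},v_{0}+\dlt)$ and the monotonicity of the Hawking mass in the exterior, with the parameters fixed in the order $v_{\ast}$, then $\eta_{0}$, then $\dlt$. Your remark that the constants in Proposition~\ref{prop:initial.mass} are $\dlt$-independent is a correct and worthwhile bookkeeping check.
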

Choosing $\eta = \eta_{0}$, Theorem~\ref{thm:main} immediately follows. We will use the flexibility in the choice of $\eta$ in Section~\ref{subsec:no-eps} below, where we discuss a procedure for removing the $\eps$ from Theorem~\ref{thm:main}.

\subsection{Removing the $\eps$-error} \label{subsec:no-eps}

Here we elaborate on Remark~\ref{rem:no-eps} and give a proof of the strengthening of Theorem~\ref{thm:main} claimed there, assuming the following continuity property of $m_{f}$:
\begin{proposition} \label{prop:mf-cont-dep}
Let $(r_{(\eta)}, \phi_{(\eta)}, m_{(\eta)})$ be the solution to \eqref{ESS} given by Theorem~\ref{thm:main-eta}.
If $v_{\ast}^{-1}$, $\eta_{0}$ and $\dlt$ are sufficiently small, then the final Bondi mass 
\begin{equation*}
m_{f (\eta)} = \lim_{u \to u_{\calE \calH}^{-}} \lim_{v \to \infty} m_{(\eta)}(u, v)
\end{equation*}
depends continuously on $\eta \in [0, \eta_{0}]$.
\end{proposition}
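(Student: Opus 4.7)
The strategy is to combine continuous dependence on initial data on compact subregions of the exterior with a uniform-in-$\eta$ asymptotic stability statement for the Schwarzschild exterior of mass $M_f$. The two pieces together yield continuity of $\eta \mapsto m_{f,(\eta)}$ via a standard three-$\ep$ argument, where one splits the difference $m_{f,(\eta_1)} - m_{f,(\eta_2)}$ as a tail near $\calI^+$, a tail as $u \to u_{\calE \calH, (\eta)}^-$, and a middle term on a large compact subregion of the exterior.

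For the compact-set part, observe that by Definition~\ref{def:data} the initial datum $\Phi$ depends piecewise linearly on $\eta$ on $[0, v_0]$ and is independent of $\eta$ on $[v_0, v_0 + \de]$, and hence continuously in any $C^k$-norm. Combining standard local well-posedness for \eqref{eq:SSESF} with the uniform a priori estimates of Theorem~\ref{thm:main.finite}, Proposition~\ref{prop:shortpulse}, and Proposition~\ref{prop.afterpulse}, one concludes that on any fixed compact region in the exterior bounded away from $i^{+}$ and the event horizon, the map $\eta \mapsto (r_{(\eta)}, \phi_{(\eta)}, m_{(\eta)})$ is continuous in the $C^1$ topology. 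As a by-product, using that the trapped region is open (cf.\ the trapped sphere produced at $(u_1, v_0 + \de)$ by Proposition~\ref{prop:shortpulse}) together with this continuous-dependence statement, one also obtains that $\eta \mapsto u_{\calE \calH, (\eta)}$ is continuous.

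The core of the argument is the uniform asymptotic stability statement. Choosing $v_{\ast}^{-1}, \eta_0, \de$ sufficiently small, Propositions~\ref{prop:shortpulse}, \ref{mass.shortpulse}, and \ref{prop.afterpulse} show that on the pair of null hypersurfaces $\underline{C}_{(far)} \cup C_{(far)}$, the solution $(r_{(\eta)}, \phi_{(\eta)}, m_{(\eta)})$ is a perturbation of the Schwarzschild exterior of mass $M_f$ whose size is bounded uniformly in $\eta \in [0, \eta_0]$ in a suitable weighted norm. Invoking the asymptotic stability of the Schwarzschild exterior for \eqref{ESS} in spherical symmetry---via the redshift estimate at the horizon, Morawetz-type integrated local energy decay, and the $r^p$-weighted energy hierarchy at null infinity as developed in \cite{DR, holzegel, LO-ext}---yields uniform-in-$\eta$ decay of $\phi$ and of $m_{(\eta)} - m_{f,(\eta)}$ along outgoing cones tending to $\calI^+$ and along ingoing cones tending to the event horizon. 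Together with the middle-region continuity from the previous paragraph, this closes the three-$\ep$ argument.

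The main obstacle is precisely this last invocation. While each ingredient of asymptotic stability (redshift, Morawetz, $r^p$-hierarchy) is individually well understood, the literature does not, to our knowledge, contain a direct statement of the quantitative perturbative decay result in the exact form needed here; this is why the paper elects to \emph{assume} Proposition~\ref{prop:mf-cont-dep} rather than supply the proof. The delicate point is to extract decay rates with constants independent of $\eta$. In spherical symmetry this is known to be achievable with the methods of \cite{DR, holzegel, LO-ext}, provided the initial-data smallness on $\underline{C}_{(far)} \cup C_{(far)}$ is verified---this part is already supplied by the estimates of Section~3 once $v_{\ast}^{-1}, \eta_0, \de$ are taken sufficiently small.
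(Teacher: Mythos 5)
There is no proof in the paper to compare against: Proposition~\ref{prop:mf-cont-dep} is explicitly \emph{assumed}, not proved. The authors state (in Remark~\ref{rem:no-eps} and again immediately after the proposition) that it "should be a consequence of asymptotic stability of the exterior of the Schwarzschild spacetime," that no direct account exists in the literature, and that they have "chosen \underline{not} to pursue the proof" -- which is precisely why the $\eps$ remains in Theorem~\ref{thm:main}. Your proposal follows exactly the route the authors gesture at: uniform smallness of the data on $\underline{C}_{(far)} \cup C_{(far)}$ relative to Schwarzschild of mass $M_f$ (supplied by Propositions~\ref{prop:shortpulse}, \ref{mass.shortpulse}, \ref{prop.afterpulse}), combined with a quantitative, uniform-in-$\eta$ asymptotic stability statement for the Schwarzschild exterior drawn from the techniques of \cite{DR, holzegel, LO-ext}, closed by a three-$\eps$ splitting. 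This is a reasonable and well-organized expansion of the paper's own remarks, but it is not a proof: the decisive step -- uniform-in-$\eta$ decay of $\phi$ and of $m_{(\eta)} - m_{f(\eta)}$ toward $\calI^{+}$ and toward the horizon, with constants controlled by the initial-data norm -- is exactly the statement that is missing from the literature, and you (correctly and candidly) do not supply it. So the proposal identifies the right strategy but leaves open the same gap the authors deliberately left open.

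One secondary point deserves more care than you give it. You claim "as a by-product" that $\eta \mapsto u_{\calE\calH,(\eta)}$ is continuous, deducing this from openness of the trapped region plus continuous dependence on compact sets. This does not follow so easily: the event horizon is defined as the boundary of the past of $\calI^{+}$, a global object, and its location is not determined by the solution on any compact region. Continuity (or even stability) of $u_{\calE\calH,(\eta)}$ under perturbation of the data is itself part of the asymptotic stability package you are invoking, not a consequence of local well-posedness. Relatedly, your "fixed compact region in the exterior bounded away from $i^{+}$ and the event horizon" must be chosen uniformly over the family $\eta \in [0,\eta_0]$, which again presupposes some uniform control on where the horizons sit. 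Neither issue is fatal -- both would be resolved by the same uniform stability theorem -- but they should be folded into that single assumed ingredient rather than presented as already established.
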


As discussed in Remark~\ref{rem:no-eps}, Proposition~\ref{prop:mf-cont-dep} would follow from asymptotic stability of the exterior of the Schwarzschild solution, where the constants in the decay estimates are bounded in terms of a suitable norm of the initial data. To keep this note brief, we will refrain from giving a proof of Proposition~\ref{prop:mf-cont-dep} here. For the interested reader, we refer to \cite{DR, holzegel, LO-ext}, which contain relevant techniques to establish Proposition~\ref{prop:mf-cont-dep}.

If Proposition~\ref{prop:mf-cont-dep} were valid, then the statement given in Remark~\ref{rem:no-eps} could be established using Theorem~\ref{thm:main-eta} and a soft continuity argument.
\begin{proof}[Proof of Remark~\ref{rem:no-eps} assuming Proposition~\ref{prop:mf-cont-dep}]
In view of the scaling transformation $(r, \phi, m) \mapsto (a r, \phi, a m )$, it suffices to construct a solution whose final-to-initial-Bondi-mass ratio $\frac{m_{f}}{m_{i}}$ coincides with the given ratio $\frac{M_{f}}{M_{i}} < 1$. Applying Theorem~\ref{thm:main-eta} with a bigger $M_{i}$ (namely, $2 M_{i} - M_{f}$) but the same $M_{f}$, we obtain a one-parameter family $(r_{(\eta)}, \phi_{(\eta)}, m_{(\eta)})$ of solutions such that
\begin{gather*}
	m_{i (0)} = M_{f} + O(\eps), \quad m_{f(0)} = M_{f} + O(\eps), \\
	m_{i (\eta_{0})} = 2 M_{i} - M_{f} + O(\eps), \quad m_{f(\eta_{0})} = M_{f} + O(\eps).
\end{gather*}
In particular, choosing $\eps > 0$ small compared to $\frac{(M_{i} - M_{f}) M_{f}}{M_{i} + M_{f}}$, which is \emph{positive} by the hypothesis $M_{i} > M_{f}$, it can be arranged so that
\begin{equation*}
	\frac{m_{f (\eta_{0})}}{m_{i (\eta_{0})}} < \frac{M_{f}}{M_{i}} < \frac{m_{f (0)}}{m_{i (0)}}.
\end{equation*}
%% PROOF
%Indeed, we have
%\begin{align*}
%	\frac{m_{f(\eta_{0})}}{m_{i (\eta_{0})}} - \frac{M_{f}}{M_{i}}
%	\leq & \frac{M_{f} + C \eps}{2 M_{i} - M_{f} - C \eps} - \frac{M_{f}}{M_{i}} \\
%	= & \frac{M_{i}(M_{f} + C \eps) - (2 M_{i} - M_{f} - C \eps) M_{f}}{M_{f} (2 M_{i} - M_{f} - C \eps)} \\
%	= & \frac{- (M_{i} - M_{f}) M_{f} + C (M_{i} + M_{f}) \eps}{M_{f} (2 M_{i} - M_{f} - C \eps)}.
%\end{align*} 
%Moreover,
%\begin{align*}
%	\frac{m_{f(0)}}{m_{i (0)}} - \frac{M_{f}}{M_{i}}
%	\geq & \frac{M_{f} - C \eps}{M_{f} + C\eps} - \frac{M_{f}}{M_{i}} \\
%	= & \frac{M_{i} (M_{f} - C \eps) - (M_{f} + C \eps) M_{f}}{M_{f} (M_{f} + C \eps)} \\
%	= & \frac{( M_{i} - M_{f} ) M_{f} - (M_{i} + M_{f}) C \eps}{M_{f} (M_{f} + C \eps)}.
%\end{align*} 

By Proposition~\ref{prop:mf-cont-dep} (as well as the continuity of $\eta \mapsto m_{i(\eta)}$, which is trivial), it follows that $\eta \mapsto \frac{m_{f (\eta)}}{m_{i (\eta)}}$ is continuous. Therefore, by the intermediate value theorem, the desired solution can be found among $(r_{(\eta)}, \phi_{(\eta)}, m_{(\eta)})_{\eta \in [0, \eta_{0}]}$. \qedhere
\end{proof}

\end{document}